\documentclass[conference]{IEEEtran}

\IEEEoverridecommandlockouts

\usepackage[a4paper, left=1.8cm, right=1.8cm, bottom=2.3cm, top=2.3cm]{geometry}

\usepackage{cite}
\usepackage{amsmath,amsthm,amssymb,amsfonts}
\usepackage{algorithmic}
\usepackage{graphicx}
\usepackage{textcomp}
\usepackage{xcolor}
\usepackage{bm}
\usepackage{cancel}
\usepackage{cleveref}

\newtheorem{definition}{Definition}
\newtheorem{theorem}{Theorem}
\newtheorem{lemma}{Lemma}
\newtheorem{proposition}{Proposition}
\newtheorem{corollary}{Corollary}
\newtheorem{claim}{Claim}
\newtheorem{remark}{Remark}

\renewcommand{\det}{D}

\newcommand{\hd}{H}
\newcommand{\Ber}{\mathrm{Ber}}

\newcommand\cH{\mathcal{H}}

\newcommand\cU{\mathcal{U}}
\newcommand\cX{\mathcal{X}}
\newcommand\cY{\mathcal{Y}}

\newcommand\bN{\mathbb{N}}
\newcommand\bP{\mathbb{P}}

\newcommand\indic{\mathbf{1}}

\newcommand\dtv{d_\mathrm{TV}}

\def\BibTeX{{\rm B\kern-.05em{\sc i\kern-.025em b}\kern-.08em
    T\kern-.1667em\lower.7ex\hbox{E}\kern-.125emX}}

\def \extended {1}

\begin{document}

\title{Secure Distributed Hypothesis Testing\thanks{The authors acknowledge the use of Gemini for editing and formatting, as well as for assisting in the development of some of the analytical arguments used to prove Proposition~\ref{prop:propFiniteRatio} and the reduction that follows. All such instances were critically examined, refined, and verified by the authors.
}
}

\author{
\IEEEauthorblockN{Gowtham R. Kurri}
\IEEEauthorblockA{\textit{IIIT Hyderabad}\\gowtham.kurri@iiit.ac.in }
\and
\IEEEauthorblockN{Varun Narayanan}
 \IEEEauthorblockA{\textit{CMI, Chennai}\\varunn@cmi.ac.in}
\and
\IEEEauthorblockN{Vinod M. Prabhakaran}
\IEEEauthorblockA{\textit{TIFR, Mumbai}\\vinodmp@tifr.res.in} 
\and
\IEEEauthorblockN{K. R. Sahasranand}
\IEEEauthorblockA{\textit{IIT Palakkad}\\sahasranand@iitpkd.ac.in}
}

\maketitle

\begin{abstract}
In distributed hypothesis testing, a central server performs hypothesis testing based on information received from distributed sensors/clients.
We study a secure variant of this problem in which the central server determines the hypothesis class of an underlying distribution without learning any additional information about the distribution itself. We prove that, in its standard form, this is impossible to achieve, even for simple and highly restricted cases. To bypass this impossibility, we augment the model with a shared secret key available to clients but hidden from the server. We show that a single-bit secret key enables perfectly secure testing for simple classes by reducing the test distributions to a symmetric, canonical instance. Finally, for arbitrary hypothesis classes over finite domains, we establish a reduction to standard hypothesis testing using Private Simultaneous Messages (PSM) protocols, achieving polynomial communication and key lengths.
\end{abstract}

\section{Introduction}
Consider the client-server model in which clients hold individual data and transmit locally computed messages to a central server tasked with evaluating a joint function of their combined data.
This computation model is highly practical and has been studied extensively in distributed computing.
In this setting, it is often critical to keep client data secure, in that, the server learns the function output but nothing else about client data.
This is the privacy notion considered in \emph{secure multiparty function computation} (MPC)\cite{Lin20}, a central primitive in cryptography and distributed computing that ensures that the server learns the function output and little else.

Distributed inference\cite{Tsitsiklis93} is a kind of computation in the client-server model where clients observe independent and identically distributed (i.i.d.) samples from an underlying distribution, and the server wants to learn a statistic of the distribution. Here, individual noisy samples (e.g., collected by distributed environmental sensors) forming the client data often lack sensitive meaning and do not require strict secrecy. Instead, a natural privacy goal is to protect the underlying distribution, ensuring that the server learns nothing about this distribution other than the target statistic.

We study this privacy notion in Distributed Hypothesis Testing (DHT)\cite{Tsitsiklis93} with composite hypotheses. Let the null hypothesis $\cH_0$ and the alternative $\cH_1$ represent distinct classes of distributions over a finite domain. In standard DHT, clients receive i.i.d. samples from an unknown test distribution $\mu \in \cH_0 \cup \cH_1$, and each client sends a local message to the server to aid detection of the hypothesis class, while optimizing the trade-off between total communication and correctness error.
We formulate Secure Distributed Hypothesis Testing (SDHT), which imposes an additional privacy constraint: the server must learn almost nothing about $\mu$ beyond its hypothesis class. 
Formally, the distributions of aggregated messages received by the server must be close in statistical distance for any pair of test distributions originating from the same class.
\subsection*{Our Contributions}
Our first contribution is definitional. SDHT devises the right model to study privacy of distributions during distributed inference. We conduct a systematic study of SDHT, detailing fundamental impossibilities and proposing workarounds.

We begin by showing that SDHT is impossible even for the simplest case where $\cH_0=\{\Ber(p_0),\Ber(p_1)\}$ vs. $\cH_1=\{\Ber(p_2)\}$ when $p_0,p_1,p_2$ are distinct; in that, both correctness and privacy error cannot be simultaneously made arbitrarily small even with arbitrarily many samples. To bypass this fundamental impossibility, we augment the model with a uniformly distributed shared secret key that is independent of the test distribution, available to all clients, and hidden from the server. 

Indeed, just a single bit of shared secret key suffices to bypass the impossibility for $\cH_0=\{\Ber(p), \Ber(1-p)\}$ and $\cH_1=\{\Ber(1/2)\}$, when $p\neq 1/2$. If the key is $0$, each client $i$ sends $Y_i=X_i$, where $X_i$ is the received sample; if the key is $1$, they send $Y_i=1-X_i$. Under both $\Ber(p)$ and $\Ber(1-p)$, the samples $Y_1,\ldots,Y_n$ are i.i.d. according to $\Ber(p)$ with probability $1/2$ and $\Ber(1-p)$ with probability $1/2$. Conversely, if the test distribution is $\Ber(1/2)$, the sequence remains i.i.d. $\Ber(1/2)$. This guarantees perfect privacy---$(Y_1,\ldots,Y_n)$ are identically distributed for both distributions in $\cH_0$---while still allowing the server to correctly identify the hypothesis class using a detector for $\cH_0$ vs. $\cH_1$.
We adapt this scheme to achieve perfectly secure hypothesis testing between any two distinct classes containing two distributions and one distribution, respectively, over an arbitrary finite domain.

Finally, we address SDHT for arbitrary hypothesis classes. We obtain a reduction from SDHT to standard (non-distributed) hypothesis testing, where the communication cost and secret key length scale polynomially with the number of samples (for a fixed domain size). This is achieved using Perfectly Private Simultaneous Messages (PSM) protocols: a primitive from information-theoretically secure multiparty computation. 

\subsection*{Related Work}
The study of distributed hypothesis testing dates back to the works of Tenney and Sandell~\cite{TenSan81} and of Ahlswede and Csiszar~\cite{AhlCsi86}.
The setting where each client/sensor gets a single sample and is rate limited was studied in \cite{Tsitsiklis1988}.
DHT has been well-studied\cite{HanAmari98, Varshney96, RahWag10}, with a focus on various aspects such as error exponents~\cite{escamilla2020distributed, hadar2019error}, sample complexity~\cite{kazemi2025sample}, communication complexity~\cite{andoni2019two, sahasranand2021communication}, and shared randomness~\cite{acharya2019communication}.

Various notions of security are considered for computation in the client server model.
Secure multiparty computation~\cite{Lin20}---a central notion in cryptography---adopts an indistinguishibility-based notion of security where the messages received by the server are required to be close in statistical distance for any pair of inputs that evaluate to the same value for the pre-agreed function.
Most relevant to our work is a primitive in MPC, namely, private simultaneous messages (PSM) protocols \cite{FKN94} where clients share a secret key unknown to the server who wants to compute a function of the client data while maintaining the aforementioned security.

A different line of work, in statistics, computer science, and other related fields, concerns differential privacy, introduced in \cite{Dwork06}, and extensively studied thereafter: see the survey by Dwork \cite{Dwork08}. Ensuring differential privacy in the client-server model leads to the study of local differential privacy~\cite{KLNRS11,DJW13,acharya2018inspectre}. Various other notions of privacy and specialized privacy-preserving tasks have been explored, including Pufferfish privacy \cite{nuradha2023pufferfish}, maximal leakage \cite{issa2019operational,LiaoSCT17}, privacy against hypothesis-testing adversaries \cite{LiOG19}, and private mean estimation \cite{agarwal2025private}. These notions are primarily about protecting client-level data, whereas our problem concerns the privacy of the statistic of the distribution. Thus, the setting we consider is fundamentally different from the above line of work.

\subsection*{Notation}
Finite sets are denoted as $\cX,\cY,\mathcal{K}$ and so on; a distribution over $\cX$ by $P_X$, etc;
random variables distributed according a $P_X$ by $X$.
When there is no room for ambiguity, a sequence $(X_1,X_2,\ldots,X_n)$ will be shortened as $X^n$.
We use $W_{Y|X} \circ P_X$ to denote the distribution of $Y$ with probability mass function $\sum_{x \in \cX} P_X(x)W_{Y|X}(\cdot|x)$. Total variation distance/statistical distance between two distributions $P$ and $Q$ over a finite domain $\cX$ is denoted $\dtv(P,Q)$ and given by $\dtv(P,Q) = \frac{1}{2}\sum_{x\in \cX}|P(x) - Q(x)|$.
The Bernoulli distribution with mean $p$ is denoted $\Ber(p)$. WLOG, as usual, stands for \emph{without loss of generality}.

\section{Problem Statement}
\begin{definition}[Binary Hypothesis Testing]
    Let $\cH_0$ and $\cH_1$ be two classes of distributions over a finite domain $\cX$.
    For $n\in\bN$, $\varepsilon\ge 0$, a $(1-\varepsilon)$-correct hypothesis test for $\mathcal{H}_0$ vs. $\mathcal{H}_1$ using $n$ samples is a deterministic function $\det:\cX^n \to \{0,1\}$ such that, when $X_1,\ldots,X_n \sim \mu$, i.i.d.,
    \[
        \bP \left[\det\left(X_1,\ldots,X_n\right)=b\right] \geq 1-\varepsilon,\; \forall \mu\in\cH_b,\, b\in\{0,1\}.
    \]
\end{definition}
\begin{definition}\label{def:private_test}
    Let $\cH_0$ and $\cH_1$ be two classes of distributions over a finite domain $\cX$.
   
    For $n\in\bN$, $\varepsilon,\delta\ge 0$, an $(\varepsilon,\delta)$-secure distributed hypothesis testing (SDHT) scheme for $\cH_0$ vs. $\cH_1$ using $n$ samples is defined by randomized functions $W_{Y_i|X_i}$ with output alphabet $\cY_i$ for each $i\in[n]$ and a deterministic function $\det:\prod_{i=1}^n \cY_i \to \{0,1\}$. Define $P^\mu_{Y^n} = \prod_{i=1}^n \left(W_{Y_i|X_i}\circ \mu\right)$ for any $\mu\in\cH_0\cup \cH_1$. Then, the following conditions hold:

    \paragraph{$(1-\varepsilon)$-correctness}
    When $(Y_1,\ldots,Y_n) \sim P_{Y^n}$,
    \[
        \bP \left[\det\left(Y_1,\ldots,Y_n\right)=b\right] \geq 1-\varepsilon, \forall \mu\in\cH_b,\, b\in\{0,1\}.
    \]
    \paragraph{$\delta$-privacy} For all $b\in\{0,1\}$ and $\mu,\mu'\in\cH_b$,
    \begin{align*}
        \dtv\left(P^\mu_{Y^n},P^{\mu'}_{Y^n}\right) \leq \delta.
    \end{align*}
    The SDHT scheme uses $\sum_{i=1}^n \log |\cY_i|$ bits of communication.
\end{definition}
\begin{remark}\label{remark} We will use the following simple observation in our proofs. By standard results (see, for instance,~\cite{tyagi2023information}; for completeness, we also include a proof at the end of the appendix), $\dtv(P,Q) \ge \delta$ if and only if there exists a deterministic test $D$ that uses sample $X$ and outputs $D(X) \in \{0,1\}$ such that
\[
\bP_P[D(X) = 0] + \bP_Q[D(X) = 1] \ge 1+\delta.
\]
\end{remark}
We will also consider SDHT using \emph{shared secret key} where each client $i$ computes $Y_i$ conditioned on $X_i$ and a sampled shared secret key $K$ distributed uniformly over a finite set $\mathcal{K}$.
An $(\varepsilon,\delta)$-SDHT scheme for $\cH_0$ vs. $\cH_1$ using shared secret key over a finite domain $\mathcal{K}$ and $n$ samples is defined by randomized functions $W_{Y_i|X_i,K}$ with output alphabet $\cY_i$ and a deterministic function $\det:\prod_{i=1}^n\cY_i \to \{0,1\}$.
When $P_K$ is the uniform distribution over $\mathcal{K}$ and $\mu\in\cH_0\cup\cH_1$, define
$$P^\mu_{X^n K Y^n} = \mu^{\otimes n} \cdot P_K \cdot \prod_{i=1}^n W_{Y_i|X_i,K}.$$
The correctness and privacy properties in \Cref{def:private_test} hold with respect to $P^\mu_{Y^n}$ (the marginal distribution of $Y^n$ under $P^\mu_{X^n K Y^n}$) for all $\mu\in\cH_0\cup \cH_1$. The communication incurred is $\sum_{i=1}^n|\cY_i|$ bits. The secret key size is $\log|\mathcal{K}|$ bits.

Hypotheses $\cH_0$ and $\cH_1$ are \emph{securely distinguishable} without shared key (respectively, with shared key) if there exists an $(\varepsilon,\delta)$-SDHT scheme using shared key (respectively, with shared key) for $\cH_0$ vs. $\cH_1$ for any $\varepsilon,\delta>0$ using $n$ samples when $n$ is sufficiently large.
\section{Main Results}
Suppose there is a randomized map (or channel) that maps every distribution in $\cH_0$ to the same message distribution and every distribution in $\cH_1$ to the same message distribution (different from the message distribution under $\cH_0$).
If each client computes their message by applying this map to their sample, \emph{perfect} privacy is guaranteed: indeed, messages received by the server are identically distributed for all distributions in $\cH_0$; similarly for $\cH_1$.
To detect the hypothesis class, the server can perform a simple hypothesis test between the message distribution under $\cH_0$ vs. that under $\cH_1$.
This is formalized in the following proposition:
\begin{proposition}\label{prop:lp}
   
    Suppose there exists $W_{Y|X}$ and distinct distributions $P^{(0)}_Y$ and $P^{(1)}_Y$ such that $W_{Y|X}\circ\mu = P^{(b)}_Y$ for every $\mu\in\cH_b$ and $b\in\{0,1\}$. Then, there exists an
    $(e^{-\Omega(n)},0)$-SDHT scheme
    for $\cH_0$ vs. $\cH_1$ using $n$ samples that incurs $n\cdot \log|\cY|$ bits of communication and does not use a shared secret key.
\end{proposition}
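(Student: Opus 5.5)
The plan is to let every client apply the given channel and then have the server run an ordinary likelihood-ratio test. Each client $i$ sets $W_{Y_i|X_i}=W_{Y|X}$ with output alphabet $\cY$, so the communication is exactly $n\log|\cY|$ bits and no key is used. Under any $\mu\in\cH_b$, the messages $Y_1,\ldots,Y_n$ are i.i.d.\ with law $W_{Y|X}\circ\mu=P^{(b)}_Y$. Hence
\[
P^\mu_{Y^n}=\left(P^{(b)}_Y\right)^{\otimes n}
\]
depends only on $b$. Consequently $\dtv(P^\mu_{Y^n},P^{\mu'}_{Y^n})=0$ for all $\mu,\mu'\in\cH_b$, which gives $0$-privacy immediately.

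For correctness, I reduce to simple binary hypothesis testing between the i.i.d.\ laws $(P^{(0)}_Y)^{\otimes n}$ and $(P^{(1)}_Y)^{\otimes n}$. Both error probabilities must be $e^{-\Omega(n)}$. I take $\det$ to be the maximum-likelihood test, i.e.\ the sign of the log-likelihood ratio $\sum_i \log\frac{P^{(1)}_Y(Y_i)}{P^{(0)}_Y(Y_i)}$, with a fixed convention on ties. To bound the errors I use the Bhattacharyya / Chernoff bound:
\[
\bP_0[\det=1]+\bP_1[\det=0]\le \sum_{y^n}\sqrt{P^{(0)}(y^n)P^{(1)}(y^n)}=\left(\sum_y\sqrt{P^{(0)}_Y(y)P^{(1)}_Y(y)}\right)^n=\rho^n .
\]
Here $\rho<1$ by Cauchy--Schwarz, with equality iff $P^{(0)}_Y=P^{(1)}_Y$, which is excluded by distinctness. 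Thus each error is at most $e^{-n\log(1/\rho)}=e^{-\Omega(n)}$.

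There is an alternative if one wants to avoid infinite log-ratios when supports differ. Pick $y$ with $P^{(0)}_Y(y)\ne P^{(1)}_Y(y)$ and threshold the empirical frequency of $y$ at the midpoint; Hoeffding's inequality then gives error $\le e^{-2n\gamma^2}$, where $\gamma=|P^{(0)}_Y(y)-P^{(1)}_Y(y)|/2$. I would use this version, since it is cleaner and handles zero probabilities.

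There is no real obstacle. The only point requiring care is that the constant in $\Omega(n)$ depends only on $P^{(0)}_Y$ and $P^{(1)}_Y$, not on $\mu$. This holds because the message law collapses to $P^{(b)}_Y$ uniformly over each class, so the error bound is uniform over $\cH_b$, as the definition of correctness requires.
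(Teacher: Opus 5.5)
Your proposal is correct and follows the same route as the paper: every client applies $W_{Y|X}$, so the message law is $(P^{(b)}_Y)^{\otimes n}$ for every $\mu\in\cH_b$ (giving $0$-privacy), and the server runs a good test for $P^{(0)}_Y$ vs.\ $P^{(1)}_Y$. The paper only asserts that an optimal test attains $e^{-\Omega(n)}$ error; you make this explicit with the Bhattacharyya bound $\rho^n$ (or Hoeffding on one symbol's empirical frequency), which adds detail without changing the argument.
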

Unfortunately, such a channel does not exist in general; specifically, even for the simple case of $\{\Ber(p_0),\Ber(p_1)\}$ vs. $\{\Ber(p_2)\}$ for distinct $p_0,p_1,p_2$, such a channel does not exist.
Our main technical contribution is showing that, for this pair of hypotheses, it is impossible to drive both correctness and privacy error arbitrarily small without using a shared secret key no matter how many samples are used.

\begin{theorem}
    
    For any distinct $p_0,p_1,p_2\in[0,1]$, the hypotheses $\{\Ber(p_0),\Ber(p_1)\}$ and $\{\Ber(p_2)\}$ are not securely distinguishable without using shared secret keys.
    \label{thm:impossibility}
\end{theorem}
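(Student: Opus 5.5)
The plan is to show that privacy between $\Ber(p_0)$ and $\Ber(p_1)$ forces the message distribution under $\Ber(p_2)$ to be close to the one under $\Ber(p_1)$, whatever the number of samples. Correctness, through Remark~\ref{remark}, would instead force these two distributions to be far apart.

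\textbf{Setup.} Fix an SDHT scheme without key, with channels $W_{Y_i|X_i}$. Since $X_i$ is binary, set $a_i=W_{Y_i|X_i}(\cdot|0)$ and $c_i=W_{Y_i|X_i}(\cdot|1)-W_{Y_i|X_i}(\cdot|0)$. Then $q_i(p):=W_{Y_i|X_i}\circ\Ber(p)=a_i+p\,c_i$ is affine in $p$, and $P^{(p)}_{Y^n}=\prod_i q_i(p)$. Put $t=(p_2-p_1)/(p_1-p_0)$, which is a fixed nonzero real depending only on $p_0,p_1,p_2$. Then
\[
q_i(p_2)=q_i(p_1)+t\bigl(q_i(p_1)-q_i(p_0)\bigr)\quad\text{for every } i.
\]
This single identity covers both $p_2$ between $p_0,p_1$ (interpolation) and $p_2$ outside (extrapolation). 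The natural tool is the Bhattacharyya coefficient $BC(P,Q)=\sum\sqrt{PQ}$. It is multiplicative over products, and it satisfies $1-BC\le \dtv\le\sqrt{1-BC^2}$.

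\textbf{Key per-coordinate inequality.} I will show there is a constant $C=C(t)<\infty$ such that, for all nonnegative $a,b$ with $a+t(a-b)\ge 0$,
\[
\bigl(\sqrt{a+t(a-b)}-\sqrt a\bigr)^2\le C\bigl(\sqrt a-\sqrt b\bigr)^2 .
\]
Write $d=a-b$. Then $\sqrt{a+td}-\sqrt a=td/(\sqrt{a+td}+\sqrt a)$ and $\sqrt a-\sqrt b=d/(\sqrt a+\sqrt b)$. So the ratio of the two sides is $t^2(\sqrt a+\sqrt b)^2/(\sqrt{a+td}+\sqrt a)^2$, and it suffices to bound $\sqrt b$ by a constant multiple of $\sqrt{a+td}+\sqrt a$. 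If $td\ge0$ (i.e.\ $t>0,\,b\le a$ or $t<0,\,b\ge a$), then $a+td\ge a$. If moreover $b\le a$, then $\sqrt b\le\sqrt a$. If instead $t<0,\,b\ge a$, then $a+td=(1+|t|)a+|t|(b-a)\ge|t|\,b$, so $\sqrt{a+td}\ge\sqrt{|t|}\sqrt b$.

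If $td<0$, nonnegativity of $a+td$ gives the constraint. For $t>0,\,b>a$ it reads $b\le a(1+t)/t$, so $\sqrt b\le\sqrt{(1+t)/t}\,\sqrt a$. For $t<0,\,b<a$ it reads $a\le |t|\,b/(|t|-1)$ (so $|t|>1$), and since $b<a$ we again have $\sqrt b\le\sqrt a$. In every case $\sqrt b$ is at most a $t$-dependent constant times the denominator, which gives $C(t)$. This case analysis, i.e.\ controlling the extrapolation blow-up when $q_i(p_0)$ has near-zero entries, is the step I expect to be the main obstacle. Summing over $y$ gives $h_i^{12}\le C\,h_i^{01}$, where $h_i^{jk}:=1-BC(q_i(p_j),q_i(p_k))=\tfrac12\sum_y(\sqrt{q_i(p_j)(y)}-\sqrt{q_i(p_k)(y)})^2$.

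\textbf{Tensorize and conclude.} By $\delta$-privacy, $\prod_i(1-h_i^{01})=BC(P^{(p_0)}_{Y^n},P^{(p_1)}_{Y^n})\ge 1-\delta$. Hence $\sum_i h_i^{01}\le-\log(1-\delta)\le2\delta$ for small $\delta$. Then
\[
BC\bigl(P^{(p_1)}_{Y^n},P^{(p_2)}_{Y^n}\bigr)=\prod_i(1-h_i^{12})\ge1-C\sum_ih_i^{01}\ge1-2C\delta,
\]
and therefore $\dtv(P^{(p_1)}_{Y^n},P^{(p_2)}_{Y^n})\le\sqrt{4C\delta}$, independently of $n$. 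On the other hand, $(1-\varepsilon)$-correctness says the detector $\det$ satisfies $\bP_{p_1}[\det=0]+\bP_{p_2}[\det=1]\ge2-2\varepsilon$. By Remark~\ref{remark} this gives $\dtv\ge1-2\varepsilon$. So $1-2\varepsilon\le\sqrt{4C\delta}$, which fails for all small enough $\varepsilon,\delta$ no matter how large $n$ is. Hence the hypotheses are not securely distinguishable.
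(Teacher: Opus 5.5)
Your proof is correct, and it takes a genuinely different and much shorter route than the paper. The paper splits by the ordering of $p_0,p_1,p_2$.
\begin{itemize}
\item When $p_2$ lies between $p_0$ and $p_1$, it uses an operational simulation. Mixing the received messages with fresh samples from $W_{Y_i|X_i}\circ\Ber(p_0)$ turns $p_1$-messages into $p_2$-messages, so the detector yields a $p_0$-vs-$p_1$ distinguisher.
\item In the extrapolation case, it pre-composes a channel to reduce to $\{\Ber(0),\Ber(\theta)\}$ vs.\ $\{\Ber(1)\}$ and tensorizes Hellinger distance (Proposition~\ref{prop:privCorrTradeoff}). It bounds the per-coordinate Hellinger ratio for binary-output channels by a calculus argument (Proposition~\ref{prop:propFiniteRatio}). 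It then extends that bound to arbitrary finite output alphabets via the $W^\gamma$ transformation and symbol merging in steps (i)--(iii).
\end{itemize}

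Your affine identity $q_i(p_2)=q_i(p_1)+t(q_i(p_1)-q_i(p_0))$ handles all orderings uniformly. Because your key inequality holds output symbol by output symbol, it sums over any alphabet, so the alphabet-reduction machinery is unnecessary.

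Nothing is lost in the constant. In the paper's canonical instance $t=(1-\theta)/\theta$, and your worst case $C=t^2(1+\sqrt{(1+t)/t})^2$ equals $1/(\sqrt{k}-1)^2$ with $k=1/(1-\theta)$. This is exactly the paper's bound $\lim_{c\downarrow 0}f(0,c)$.

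What the paper's case split buys is a stronger tradeoff in the interpolation case: $\delta\ge 1-2\varepsilon$ follows directly from the simulation, versus your $\delta\ge(1-2\varepsilon)^2/(4C)$. In exchange, your argument extends verbatim to any three distinct collinear distributions on a finite domain. It only uses that $\mu_2-\mu_1$ is a fixed multiple of $\mu_1-\mu_0$, which is preserved by every channel.

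One algebra slip needs repair. For $t<0$, $b\ge a$, you have $a+td=a+|t|(b-a)$, not $(1+|t|)a+|t|(b-a)$. The claimed bound $a+td\ge|t|\,b$ then fails when $|t|>1$ and $a>0$. The conclusion survives, because $a+td\ge\min(1,|t|)\,b$ (for $|t|>1$, use $a\le b$ to get $a+td\ge b$), so $\sqrt b\le\sqrt{a+td}/\sqrt{\min(1,|t|)}$. Similarly, in the subcase $t<0$, $b<a$, the nonnegativity constraint does not force $|t|>1$. You only use $b<a$ there, so nothing breaks.
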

Using \Cref{thm:impossibility} and \Cref{prop:lp}, we can obtain the following general result.
\begin{corollary}\label{corr:collinear}
   
    Let $\mu_0,\mu_1,\mu_2$ be distributions over the same domain.
    Then, hypotheses $\{\mu_0,\mu_1\}$ and $\{\mu_2\}$ are securely distinguishable without shared secret key if and only if there exists no $\theta \in[0,1]$ and distinct $a,b,c\in\{0,1,2\}$ such that $\theta\mu_a+(1-\theta)\mu_b=\mu_c$. 
\end{corollary}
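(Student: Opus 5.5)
The corollary has two directions, and I would prove each separately.

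\emph{``If'' direction (no collinearity $\Rightarrow$ distinguishable).} Assume that no one of $\mu_0,\mu_1,\mu_2$ is a convex combination of the other two. I would construct a channel $W_{Y|X}$ with $W\circ\mu_0=W\circ\mu_1\neq W\circ\mu_2$ and then apply \Cref{prop:lp}. A natural choice is a binary output $Y\in\{0,1\}$ with $W(1|x)=f(x)$ for some $f:\cX\to[0,1]$. The requirements become two linear conditions:
\[
\langle f,\mu_0-\mu_1\rangle=0,\qquad \langle f,\mu_2-\mu_0\rangle\neq 0.
\]
Such an $f\in\mathbb{R}^{\cX}$ exists exactly when $\mu_2-\mu_0$ is not a scalar multiple of $\mu_0-\mu_1$, that is, when $\mu_2$ does not lie on the affine line through $\mu_0,\mu_1$. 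Affine rescaling $f\mapsto \alpha f+\beta$ preserves both conditions, because $\mu_i-\mu_j$ sums to zero. So we may push $f$ into $[0,1]$, and it then defines a valid channel. The remaining case is $\mu_2=\mu_0+t(\mu_1-\mu_0)$ for some real $t$. The hypothesis excludes $t\in[0,1]$, so $\mu_2$ lies on the line outside the segment. Then $\mu_0$ or $\mu_1$ is a strict convex combination of $\mu_2$ and the other one: for $t>1$, $\mu_1=\tfrac1t\mu_2+(1-\tfrac1t)\mu_0$, and $t<0$ is symmetric. Either way this contradicts the hypothesis. Distinctness is also implied by the hypothesis, taking $\theta\in\{0,1\}$. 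So the line case cannot occur, and \Cref{prop:lp} finishes this direction.

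\emph{``Only if'' direction (collinearity $\Rightarrow$ not distinguishable).} Suppose $\theta\mu_a+(1-\theta)\mu_b=\mu_c$ for distinct $a,b,c$. If two of the $\mu$'s coincide, the claim is easy. If $\mu_2\in\{\mu_0,\mu_1\}$, correctness fails outright. If $\mu_0=\mu_1=\mu_2$ the same holds, and I would also handle the degenerate $\theta\in\{0,1\}$ cases in this way.

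Otherwise all three are distinct and lie on one segment. Parametrize that segment by the line $\lambda\mapsto\nu_0+\lambda(\nu_1-\nu_0)$, where $\nu_0,\nu_1$ are its endpoints among the $\mu$'s. Consider the channel $V_{X|Z}$ from $\{0,1\}$ to $\cX$ given by $V(\cdot|0)=\nu_0$ and $V(\cdot|1)=\nu_1$. Then $V\circ\Ber(\lambda_i)=\mu_i$ for distinct $\lambda_i\in[0,1]$. Given any $(\varepsilon,\delta)$-SDHT scheme $\{W_{Y_i|X_i}\},\det$ for $\{\mu_0,\mu_1\}$ vs.\ $\{\mu_2\}$, I would compose it to get client channels $W_{Y_i|X_i}\circ V$ acting on Bernoulli samples. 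Because $\mu_i^{\otimes n}$ arises from $\Ber(\lambda_i)^{\otimes n}$ through $V^{\otimes n}$, the message distributions are identical. So this is an $(\varepsilon,\delta)$-SDHT scheme for $\{\Ber(\lambda_0),\Ber(\lambda_1)\}$ vs.\ $\{\Ber(\lambda_2)\}$. Since the $\lambda_i$ are distinct, \Cref{thm:impossibility} says this cannot exist for all small $\varepsilon,\delta$, which gives the contradiction.

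The main obstacle is the bookkeeping in the first direction, namely checking that ``not on the affine line'' is equivalent to the corollary's condition. The reduction in the second direction is routine.
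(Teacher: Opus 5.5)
Your argument follows the paper's proof in both directions. For ``if'', the paper also takes a vector $\bm v$ orthogonal to $\mu_0-\mu_1$ with $\langle \bm v,\mu_0-\mu_2\rangle>0$, shifts and scales it into $[0,1]^{\cX}$ to obtain a Bernoulli-output channel, and applies \Cref{prop:lp}. For ``only if'', it uses the same two-input channel ($U=0\mapsto\mu_a$, $U=1\mapsto\mu_b$) to turn any scheme into one for three Bernoulli hypotheses, then invokes \Cref{thm:impossibility}. You explicitly exclude the case where $\mu_2$ lies on the line through $\mu_0,\mu_1$ but outside the segment. The paper only asserts this step (``by non-collinearity'').

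The one claim that does not hold is ``if two of the $\mu$'s coincide, the claim is easy.'' You dispose of $\mu_2\in\{\mu_0,\mu_1\}$ and of $\mu_0=\mu_1=\mu_2$, but not of $\mu_0=\mu_1\neq\mu_2$. In that case the right-hand side of the corollary fails (take $a=0$, $b=2$, $c=1$, $\theta=1$), so the ``only if'' direction asserts the hypotheses are not securely distinguishable. Yet $\cH_0=\{\mu_0\}$ is then a singleton and the privacy constraint is vacuous. Sending $Y_i=X_i$ and running an optimal test of $\mu_0$ vs.\ $\mu_2$ gives an $(e^{-\Omega(n)},0)$-SDHT scheme.

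So this case cannot be handled: the statement as written is false there. The paper's proof has the same blind spot. Its reduction maps $a,b,c$ to $0,1,\lambda$, and \Cref{thm:impossibility} needs these values distinct, which fails for $\lambda\in\{0,1\}$.

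There are two ways to repair the statement. One is to assume $\mu_0\neq\mu_1$, as the phrase ``two distributions'' in the introduction suggests. The other is to state the criterion as ``$\mu_2-\mu_0$ is not a scalar multiple of $\mu_1-\mu_0$,'' which is exactly the condition your ``if'' direction uses. With either fix, your case analysis proves both directions: coincidence with $\mu_2$ destroys correctness, and otherwise one distribution is a strict convex combination of the other two, so the Bernoulli reduction applies.
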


We bypass the above impossibility by augmenting the model with a uniformly distributed shared secret key.
This key is independent of the test distribution, available to all clients, and hidden from the server.
\begin{theorem}\label{thm:achievability}
    Let $\mu_0,\mu_1,\mu_2$ be distributions over the same domain.
    There exists a $(e^{-\Omega(n)},0)$-SDHT scheme for $\{\mu_0,\mu_1\}$ vs. $\{\mu_2\}$ using $n$ samples and a $1$-bit shared secret key.
\end{theorem}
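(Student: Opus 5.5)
The plan is to reduce every instance to the symmetric Bernoulli example from the introduction. Each client applies a binary channel to its sample, and the single key bit decides whether the output bit is flipped. We assume $\mu_0\neq\mu_1$ and $\mu_2\notin\{\mu_0,\mu_1\}$. If $\mu_2\in\{\mu_0,\mu_1\}$ the statement cannot hold. If $\mu_0=\mu_1$, \Cref{prop:lp} with the identity channel already suffices.

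\textbf{Step 1: a test function.} We look for $f:\cX\to[0,1]$ with
\[
\langle\mu_0,f\rangle=p,\qquad \langle\mu_1,f\rangle=1-p,\qquad p\neq 1/2,
\]
and we set $r:=\langle\mu_2,f\rangle$, requiring $r\notin\{p,1-p\}$. Write $f=\tfrac12+g$. Since each $\mu_j$ has total mass $1$, the conditions become:
\begin{itemize}
\item $\langle\mu_0+\mu_1,g\rangle=0$, so $g$ lies in a subspace $V$;
\item $\langle\mu_0,g\rangle\neq 0$, which gives $p\neq1/2$;
\item $\langle\mu_2-\mu_0,g\rangle\neq0$, which gives $r\neq p$;
\item $\langle\mu_2-\mu_1,g\rangle\neq0$, which gives $r\neq 1-p$, because $\langle\mu_1,g\rangle=-\langle\mu_0,g\rangle$.
\end{itemize}
Each excluded set is a proper hyperplane of $V$, unless the corresponding functional is a multiple of $\mu_0+\mu_1$.

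\begin{itemize}
\item Suppose $\mu_0=c(\mu_0+\mu_1)$. Comparing masses gives $c=1/2$, hence $\mu_0=\mu_1$, which is excluded.
\item Suppose $\mu_2-\mu_0=c(\mu_0+\mu_1)$. The left side has mass $0$, so $c=0$ and $\mu_2=\mu_0$, which is excluded.
\item The case $\mu_2-\mu_1$ is identical.
\end{itemize}
A finite union of proper hyperplanes does not cover $V$, so a suitable $g$ exists. Rescaling $g$ by a small positive constant keeps all three conditions and ensures $f\in[0,1]^{\cX}$.

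\textbf{Step 2: the scheme.} Let $K$ be a uniform bit. Client $i$ outputs $Y_i\sim\Ber(f(X_i))$, and then flips $Y_i$ if $K=1$.

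\emph{Privacy.} Under $\mu_0$ the output is i.i.d. $\Ber(p)$ given $K=0$ and i.i.d. $\Ber(1-p)$ given $K=1$. Under $\mu_1$ the roles are swapped. Hence in both cases
\[
P_{Y^n}=\tfrac12\Ber(p)^{\otimes n}+\tfrac12\Ber(1-p)^{\otimes n},
\]
so the scheme is $0$-private. $\cH_1$ is a singleton, so nothing is needed there. Under $\mu_2$ we get $P_{Y^n}=\tfrac12\Ber(r)^{\otimes n}+\tfrac12\Ber(1-r)^{\otimes n}$.

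The key point, and the step I expected to be the obstacle, is that we do \emph{not} need $r=1/2$. Requiring $r=1/2$ would fail when $\mu_2$ is an affine combination of $\mu_0,\mu_1$ other than their midpoint. It is enough that the two-point sets $\{p,1-p\}$ and $\{r,1-r\}$ are disjoint, which Step 1 guarantees.

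\textbf{Step 3: the detector.} The server computes $\hat q=\frac1n\sum_i Y_i$. It outputs $0$ if $\hat q$ is closer to $\{p,1-p\}$ than to $\{r,1-r\}$, and $1$ otherwise. Let $\gamma>0$ be half the minimal gap between the two sets. Conditioned on $K$, the $Y_i$ are i.i.d. Bernoulli, so Hoeffding's inequality bounds the error probability by $2e^{-2n\gamma^2}=e^{-\Omega(n)}$ in each case. This gives an $(e^{-\Omega(n)},0)$-SDHT scheme that uses a 1-bit key and one bit of communication per client.
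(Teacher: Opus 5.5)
Your proof is correct, but it follows a different route from the paper's. The paper splits into cases via \Cref{corr:collinear}. If none of $\mu_0,\mu_1,\mu_2$ lies on the segment joining the other two, \Cref{prop:lp} already gives a keyless perfectly private scheme, so the key is needed only in the collinear case. There the paper composes two explicit channels and then XORs with the key. The first is the deterministic binarization $x\mapsto\indic\{\mu_0(x)>\mu_1(x)\}$, which keeps the three images distinct only because of collinearity. The second is an affine binary channel $t\mapsto mt+k$, with $k$ tuned so that the images of $\Ber(p_0),\Ber(p_1)$ are $\Ber(p),\Ber(1-p)$. It then invokes an optimal test for $\{\Ber(p),\Ber(1-p)\}$ vs.\ $\{\Ber(q),\Ber(1-q)\}$.

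You instead obtain the whole binary channel in one step, by avoiding three hyperplanes in $V=\{g:\langle\mu_0+\mu_1,g\rangle=0\}$. This needs no case split and no appeal to the corollary. A generic $g$ also works for non-collinear triples, where the paper's threshold map can fail: for $\mu_0=(\tfrac12,\tfrac12,0)$, $\mu_1=(0,\tfrac12,\tfrac12)$, $\mu_2=(\tfrac12,0,\tfrac12)$ it gives $\mu_2$ the same image as $\mu_0$. You also make explicit the hypothesis $\mu_2\notin\{\mu_0,\mu_1\}$ that the statement tacitly needs, and you give a concrete detector with exponent $2\gamma^2$.

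In exchange, the paper's route produces an explicit channel with concrete parameter ranges. It also records the finer fact that the secret key is needed only in the collinear case.
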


Finally, we address SDHT for arbitrary hypothesis classes in the shared-key setting.
We demonstrate a reduction from SDHT to standard (non-distributed) hypothesis testing, where the communication cost and secret key length scale polynomially with the number of samples (for a fixed domain size).

This is achieved using Perfectly Private Simultaneous Messages (PSM) protocols: a primitive from information-theoretically secure multiparty computation. 
These protocols allow the server to compute a predetermined function of the clients' inputs while ensuring perfect privacy against the server. 
Let $\det:\cX^n \to \{0,1\}$ be a detector for the standard hypothesis test between $\cH_0$ and $\cH_1$ using $n$ samples. 
By applying a PSM protocol to $\det$, we obtain a distributed test with correctness and privacy errors upper-bounded by the correctness error of $\det$. 
Because $\det$ is a symmetric function, we utilize Eriguchi and Shinagawa PSM protocol~\cite{ES25} for symmetric functions, yielding exponential communication efficiency compared to general PSM protocols.

\begin{theorem}\label{thm:psm}
 
    Let $\det:\prod_{i=1}^n\cX^n$ be a symmetric function achieving $(1-\varepsilon)$-correct hypothesis test for $\mathcal{H}_0$ vs. $\mathcal{H}_1$ using $n$ samples.
    There exists an $(\varepsilon,\varepsilon)$-SDHT scheme for $\cH_0$ vs. $\cH_1$ using $n$ samples, incurring $O(n^{2\lceil |\cX|/3\rceil})$ bits of communication and shared keys.
\end{theorem}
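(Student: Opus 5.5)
The plan is to compose the centralized detector $\det$ with a perfectly private simultaneous messages (PSM) protocol for $\det$ itself. Let $(\mathrm{Enc}_1,\ldots,\mathrm{Enc}_n,\mathrm{Dec})$ be a PSM protocol for the symmetric function $\det:\cX^n\to\{0,1\}$, with common randomness $K$ uniform over $\mathcal{K}$. Such a protocol has two properties. \emph{Perfect correctness}: $\mathrm{Dec}(\mathrm{Enc}_1(x_1,K),\ldots,\mathrm{Enc}_n(x_n,K))=\det(x^n)$ for every $x^n$ and $K$. \emph{Perfect privacy}: there is a simulator distribution $S_b$ such that, for every $x^n$ with $\det(x^n)=b$, the message tuple $M^n(x^n)=(\mathrm{Enc}_i(x_i,K))_i$ is distributed exactly as $S_b$.

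The SDHT scheme lets the shared secret key be $K$ and has client $i$ send $Y_i=\mathrm{Enc}_i(X_i,K)$. The server's detector is $\mathrm{Dec}$. By perfect correctness, $\mathrm{Dec}(Y^n)=\det(X^n)$ pointwise, so the correctness error under any $\mu\in\cH_b$ equals that of $\det$, which is at most $\varepsilon$.

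For privacy, fix $b$ and $\mu\in\cH_b$. Conditioning on $X^n$ and using perfect privacy gives
\[
P^\mu_{Y^n}=\sum_{x^n}\mu^{\otimes n}(x^n)\,S_{\det(x^n)}=\alpha_\mu S_b+(1-\alpha_\mu)S_{1-b},
\]
where $\alpha_\mu=\bP_\mu[\det(X^n)=b]\ge 1-\varepsilon$. For $\mu,\mu'\in\cH_b$ the difference is $(\alpha_\mu-\alpha_{\mu'})(S_b-S_{1-b})$. Hence
\[
\dtv(P^\mu_{Y^n},P^{\mu'}_{Y^n})=|\alpha_\mu-\alpha_{\mu'}|\,\dtv(S_b,S_{1-b})\le|\alpha_\mu-\alpha_{\mu'}|\le\varepsilon,
\]
since both $\alpha$'s lie in $[1-\varepsilon,1]$. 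This gives $\delta=\varepsilon$.

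It remains to bound the costs, and this is the step I expect to carry the real content. Since $\det$ is symmetric, it depends only on the type (histogram) of $x^n$. I would invoke the Eriguchi--Shinagawa PSM for symmetric functions~\cite{ES25}, whose message and randomness lengths are $O(n^{2\lceil |\cX|/3\rceil})$ for $n$ parties with inputs in $\cX$, and I would check that their cost is stated for total communication and key size in this form. If their bound is instead per party, or is stated for a differently parametrized input domain, I would adjust by encoding each $x_i$ as an element of $\cX$ (domain size $|\cX|$) and re-derive the total. The fallback is a generic symmetric-function PSM that secret-shares the histogram and evaluates $\det$ on it, which would give a weaker polynomial bound. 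Everything else in the argument is routine.
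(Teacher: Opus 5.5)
Your proposal follows the paper's proof. You instantiate the Eriguchi--Shinagawa PSM (\Cref{thm:ES25}) for the symmetric detector, inherit correctness from perfect PSM correctness, and get privacy by writing $P^\mu_{Y^n}$ as the mixture $\alpha_\mu S_b+(1-\alpha_\mu)S_{1-b}$.

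Your privacy step is sharper than the paper's. The paper only shows that each $P^\mu_{Y^n}$ is within $\varepsilon$ of a fixed reference $P_{Y^n|X^n=x^n}$. Via the triangle inequality, that gives $2\varepsilon$ between two members of the same class. Your identity $P^\mu_{Y^n}-P^{\mu'}_{Y^n}=(\alpha_\mu-\alpha_{\mu'})(S_b-S_{1-b})$ yields $\delta=\varepsilon$ directly.

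One correction on the cost. The paper quotes \Cref{thm:ES25} as giving $O(n^{\delta})$ communication and key length with $\delta=2\lceil|\cX|/3\rceil+1$, not $2\lceil|\cX|/3\rceil$, and its own proof ends with that exponent. Invoking the citation therefore gives $O(n^{2\lceil|\cX|/3\rceil+1})$, one factor of $n$ above the stated bound. Re-encoding inputs or re-deriving the total from a per-party bound cannot remove this factor; a per-party bound would only make the total larger. The mismatch is between the theorem statement and the paper's own proof, not a flaw peculiar to your argument. Still, you should record the exponent the citation actually delivers rather than assume it matches the statement.
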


\section{Proofs}
\subsection*{Proof of Theorem~\ref{thm:impossibility}}
Any details omitted from this proof are provided in
\if \extended 1%
the Appendix.
\fi
\if \extended 0%
\cite[Appendix]{sdht2026}.
\fi
It suffices to consider the cases $p_0<p_2<p_1$ and $p_0<p_1<p_2$; the other cases are similarly handled by relabelling $p_i$.
    The former is relatively straightforward to establish, while the latter requires a more delicate argument.

    (a) \emph{Case $p_0<p_2<p_1$}: Let $p_2=\theta\cdot p_1 + (1-\theta)\cdot p_0$ for some $\theta \in (0,1)$.
    Towards a contradiction, suppose there exists an $(\varepsilon,\delta)$-SDHT scheme $\left(\{W_{Y_i|X_i}\}_{i=1}^n,\det\right)$ for this problem using $n$ samples and without using a shared key.
    Let,
    \[ (Y^{(b)}_1,\ldots,Y^{(b)}_n) \sim \prod_{i=1}^n W_{Y_i|X_i}\circ \Ber(p_b), \forall b\in\{0,1,2\}.\]
    By ($1-\varepsilon$)-correctness, $\det(Y^{(b)}_1,\ldots,Y^{(b)}_n)=c$ with probability at least $1-\varepsilon$ for each $(b,c)\in\{(0,0),(1,0),(2,1)\}$.
    We will construct a randomized map $\det'$ such that $\det'(Y^{(b)}_1,\ldots,Y^{(b)}_n)=b$ with probability at least $1-\varepsilon$ for $b\in\{0,1\}$.
    This contradicts $\delta$-privacy for sufficiently small $\varepsilon>0$ and $\delta>0$.

    $\det'$ is defined as follows: On input $Y_1,\ldots,Y_n$, for each $i\in[n]$, sample $Y'_1,\ldots,Y'_n$ i.i.d. according to $W_{Y_i|X_i}\circ \Ber(p_0)$ and $B_1,\ldots,B_n$ i.i.d. according to $\Ber(\theta)$.
    For each $i\in[n]$, let $Z_i = B_i\cdot Y_i + (1-B_i)\cdot Y'_i$, and output $\det(Z_1,\ldots,Z_n)$.
    If each $Y_i$ is distributed i.i.d. according to $W_{Y_i|X_i}\circ \Ber(p_0)$, then each $Z_i$ is also distributed i.i.d. according to $W_{Y_i|X_i}\circ \Ber(p_0)$.
    But, if each $Y_i$ is distributed i.i.d. according to $W_{Y_i|X_i}\circ \Ber(p_1)$, then each $Z_i$ is distributed i.i.d. according to $W_{Y_i|X_i}\circ \Ber(p_2)$.
    This follows from the fact that, for any $W_{Y|X}$, $$W_{Y|X}\circ \Ber(p_2) = W_{Y|X}\circ \left(\theta\cdot \Ber(p_1) + (1-\theta)\cdot \Ber(p_0)\right).$$
    The properties of $\det'$ follow from the correctness of $\det$.

    (b) \emph{Case $p_0<p_1<p_2$}:
    There exists $\theta\in[0,1]$,
    such that $p_1=\theta\cdot p_2 + (1-\theta)\cdot p_0$.
    Consider $R_{X|U}$ where $R_{X|U}(\cdot|0)=\Ber(p_0)$ and $R_{X|U}(\cdot|1)=\Ber(p_2)$.
    Then, $R_{X|U}\circ \Ber(0) = \Ber(p_0)$, $R_{X|U}\circ \Ber(1) = \Ber(p_2)$ and $R_{X|U}\circ \Ber(\theta) = \Ber(p_1)$.
    Thus, if $(\{W_{Y_i|X_i}\}_{i=1}^n, \det)$ is an $(\varepsilon,\delta)$-SDHT scheme for $\{\Ber(p_0),\Ber(p_1)\}$ vs. $\{\Ber(p_2)\}$, then $(\{W_{Y_i|X_i}\circ R_{X|U}\}_{i=1}^n, \det)$ is an $(\varepsilon,\delta)$-SDHT scheme for $\{\Ber(0),\Ber(\theta)\}$ vs. $\{\Ber(1)\}$.
Hence, it suffices to show that theorem holds for $p_0=0, p_1=\theta$ and $p_2=1$ for all $\theta\in (0,1)$.

It suffices to restrict to SDHT schemes $(\{W_{Y_i|X_i}\}_{i=1}^n,\det)$ where $W_{Y_i|X_i}(\cdot|0) \neq W_{Y_i|X_i}(\cdot|1)$ for all $i\in[n]$.
Clearly, $Y_j$ is independent of $\{Y_i\}_{i\neq j}$ and identically distributed irrespective of the distribution of $X_j$.
Hence, for all $a,b\in\{0,1,2\}$, denoting $(Y_1,\ldots,Y_{j-1},Y_{j+1},\ldots,Y_n)$ as $Y^{-j}$,
\begin{align*}
\dtv\left(P^{(a)}_{Y^n},P^{(b)}_{Y^n}\right)
= \dtv\left(P^{(a)}_{Y^{-j}},P^{(b)}_{Y^{-j}}\right).
\end{align*}
In other words, $(\{W_{Y_i|X_i}\}_{i=1}^n,\det)$ is an $(\varepsilon,\delta)$-SDHT scheme (if and) only if $(\{W_{Y_i|X_i}\}_{i\neq j},\det)$ is an $(\varepsilon,\delta)$-SDHT scheme.
\begin{proposition}
    Suppose $P^{(j)}_{Y_i} \equiv W_{Y_i|X_i} \circ ~\Ber(p_j)$ for each $j \in \{0,1,2\}$ and $i\in[n]$. Suppose
    \begin{align}\label{prop5eq0}
        \inf_{\substack{W_{Y_i|X_i}:\\ W_{Y_i|X_i}(\cdot|0) \neq W_{Y_i|X_i}(\cdot|1)}} \frac{\hd^2\left(P^{(0)}_{Y_i},P^{(1)}_{Y_i}\right)}{\hd^2\left(P^{(1)}_{Y_i},P^{(2)}_{Y_i}\right)} = \lambda > 0.
    \end{align}
    Then, either
    $
        \dtv(P^{(1)}_{Y^n},\,P^{(2)}_{Y^n}) \le 1/2
    $
    or
    $
        \dtv(P^{(0)}_{Y^n},\,P^{(1)}_{Y^n}) \ge
        1-e^{-\left(\sqrt{3}/2-1\right)\lambda}
    $ for all $n$ and all schemes using $n$ samples.
    
\label{prop:privCorrTradeoff}
\end{proposition}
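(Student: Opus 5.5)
We will work with the squared Hellinger distance $\hd^2(P,Q)=1-\sum_y\sqrt{P(y)Q(y)}=1-\mathrm{BC}(P,Q)$, where $\mathrm{BC}$ is the Bhattacharyya coefficient. Two properties drive the proof. First, $\mathrm{BC}$ tensorizes over product distributions. Second, Hellinger and total variation are comparable via the standard bounds
\[
\hd^2(P,Q)\le \dtv(P,Q)\le \sqrt{1-\mathrm{BC}(P,Q)^2}.
\]
Since every $P^{(j)}_{Y^n}$ is a product, we have $\mathrm{BC}(P^{(a)}_{Y^n},P^{(b)}_{Y^n})=\prod_{i=1}^n \bigl(1-\hd^2(P^{(a)}_{Y_i},P^{(b)}_{Y_i})\bigr)$. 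Write $h_i=\hd^2(P^{(1)}_{Y_i},P^{(2)}_{Y_i})$ and $g_i=\hd^2(P^{(0)}_{Y_i},P^{(1)}_{Y_i})$. By the reduction just before the proposition, we may assume $W_{Y_i|X_i}(\cdot|0)\neq W_{Y_i|X_i}(\cdot|1)$ for every $i$. The hypothesis \eqref{prop5eq0} then gives $g_i\ge \lambda h_i$ for all $i$.

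The argument is a dichotomy on $S=\sum_i h_i$. Suppose $\dtv(P^{(1)}_{Y^n},P^{(2)}_{Y^n})>1/2$. The upper bound gives $1-\mathrm{BC}^2>1/4$, so $\prod_i(1-h_i)=\mathrm{BC}(P^{(1)}_{Y^n},P^{(2)}_{Y^n})<\sqrt3/2$. Using $1-h\le e^{-h}$ would give only an upper bound on the product, which points the wrong way; we instead need a lower bound on $S$. We get it from $-\log\prod_i(1-h_i)>\log(2/\sqrt3)$. One option is the inequality $\log(1-h)\ge -h/(1-h)$. The cleaner option, and the one matching the constant $\sqrt3/2-1$, is a linear bound: $\prod_i(1-h_i)\ge 1-\sum_i h_i$ (Weierstrass product inequality, valid because $h_i\in[0,1]$). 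Hence $1-S<\sqrt3/2$, i.e.\ $S>1-\sqrt3/2$.

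Now we transfer this to the $0$-vs-$1$ pair:
\[
\mathrm{BC}(P^{(0)}_{Y^n},P^{(1)}_{Y^n})=\prod_i(1-g_i)\le e^{-\sum_i g_i}\le e^{-\lambda S}< e^{-(1-\sqrt3/2)\lambda}.
\]
Then we convert back to total variation. The statement's form $1-e^{-c\lambda}$ suggests using the weaker bound $\dtv\ge 1-\mathrm{BC}$, which is the lower bound above. This yields $\dtv(P^{(0)}_{Y^n},P^{(1)}_{Y^n})\ge 1-e^{-(1-\sqrt3/2)\lambda}$. The stated exponent is written as $-(\sqrt3/2-1)\lambda$. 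I read this as a sign typo for $-(1-\sqrt3/2)\lambda$, since that is the only way the bound is nontrivial. The argument holds for every $n$ and every scheme, because we used only per-coordinate inequalities.

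The steps themselves are routine. The main point to get right is the direction of each inequality: the product-to-sum bound must be $\prod(1-h_i)\ge 1-\sum h_i$ on the $1$-vs-$2$ side and $\prod(1-g_i)\le e^{-\sum g_i}$ on the $0$-vs-$1$ side. I would also double-check the two Hellinger–TV inequalities under the normalization $\hd^2=1-\mathrm{BC}$. The genuinely hard part, showing $\lambda>0$ in \eqref{prop5eq0}, is assumed here and is not part of this proposition.
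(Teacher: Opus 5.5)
Your proof is correct and is essentially the paper's argument: the same dichotomy on $\dtv(P^{(1)}_{Y^n},P^{(2)}_{Y^n})$, the same Hellinger--TV sandwich and tensorization, the bound $\prod_i(1-h_i)\ge 1-\sum_i h_i$ on the $1$-vs-$2$ side and $1-x\le e^{-x}$ on the $0$-vs-$1$ side. The only difference is that the paper normalizes $\hd^2(P,Q)=\sum_y(\sqrt{P(y)}-\sqrt{Q(y)})^2=2(1-\mathrm{BC})$, which does not change $\lambda$. Your reading of the exponent is also right: the paper's own chain produces $e^{-\frac{\lambda}{2}(2-\sqrt{3})}=e^{-(1-\sqrt{3}/2)\lambda}$, so the printed $-(\sqrt{3}/2-1)\lambda$ is a sign slip, and read literally it makes the second alternative vacuous and hence implied by your bound.
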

\begin{proof}
    Fix $n$ and a SDHT scheme $\{\{W_{Y_i|X_i}\}_{i=1}^n,\det\}$.
    Let $P^{(j)}_{Y^n} = \prod_{i=1}^n W_{Y_i|X_i}\circ \Ber(p_j)$ for $j\in\{0,1,2\}$.
    Suppose $\dtv\left(P^{(1)}_{Y^n},\,P^{(2)}_{Y^n}\right) \ge 1/2$; else we are done.
    From the relationship between total variation distance and squared Hellinger distance~\cite[Equation (7.22)]{polyanskiy_wu_itbook},
    \begin{align}
        \dtv\left(P,Q\right)\leq \sqrt{1-\left(1-\frac{1}{2}\hd^2(P,Q)\right)^2}.
    \end{align}
    Hence,
    \begin{align}
        H^2\left(P^{(1)}_{Y^n},\,P^{(2)}_{Y^n}\right)   & \geq 2-2\sqrt{1-\dtv\left(P^{(1)}_{Y^n},\,P^{(2)}_{Y^n}\right)^2}\nonumber \\
        & \geq 2(1-\sqrt{(1-(1/2)^2)})
         = 2-\sqrt{3} \label{prop5eq01}.
        \end{align}
    Next,
    \begin{align}
        2\dtv   \left(P^{(0)}_{Y^n},\,P^{(1)}_{Y^n}\right)\ge{} &  \hd^2\left(P^{(0)}_{Y^n},\,P^{(1)}_{Y^n}\right)\label{prop5eq1}                                                     \\
        ={}&     2 - 2 \prod_{i=1}^n \left(1-\frac{1}{2}H^2\left(P^{(0)}_{Y_i},P^{(1)}_{Y_i}\right)\right)\label{prop5eq2}         \\
        \ge{}&   2 - 2\mathsf{e}^{-\frac{1}{2}\sum_{i=1}^nH^2\left(P^{(0)}_{Y_i},P^{(1)}_{Y_i}\right)} \label{prop5eq3}        \\
        \ge{} &  2 - 2\mathsf{e}^{-\frac{\lambda}{2}\sum_{i=1}^nH^2\left(P^{(1)}_{Y_i},\,P^{(2)}_{Y_i}\right)}\label{prop5eq4} \\
        \geq{} & 2 - 2\mathsf{e}^{-\frac{\lambda}{2}H^2\left(P^{(1)}_{Y^n},\,P^{(2)}_{Y^n}\right)}\label{prop5eq5}             \\
        \geq{} & 2\left(1-\mathsf{e}^{-\left(\sqrt{3}/2-1\right)\lambda}\right),\label{prop5eq6}
    \end{align}
    where \eqref{prop5eq1} follows from~\cite[Equation (7.22)]{polyanskiy_wu_itbook}, \eqref{prop5eq2} follows from the tensorization property for squared Hellinger distance under product distributions~\cite[Equation (7.26)]{polyanskiy_wu_itbook}, \eqref{prop5eq3} follows from $1-x\leq \mathsf{e}^{-x}$, for all $x\in\mathbb{R}$, \eqref{prop5eq4} follows from the definition of $\lambda$ in \eqref{prop5eq0}, \eqref{prop5eq5} follows from the tensorization property together with the identity $1-\prod_{i=1}^n(1-x_i)\leq \sum_{i=1}^nx_i$, for all $x_i\in[0,1]$, \eqref{prop5eq6} follows from the lower bound in \eqref{prop5eq01}. 
\end{proof}
Next, we show that the assumption in Proposition~\ref{prop:privCorrTradeoff} is satisfied. WLOG, suppose that $\cH_0 = \{\Ber(0), \Ber(\theta)\}$ for a fixed $\theta \in (0, 1)$ and that $\cH_1= \{\Ber(1)\}$. 
\begin{proposition}
    We have
    \[
        \sup_{W_{Y_i|X_i}: W_{Y_i|X_i}(1|0) \neq W_{Y_i|X_i}(1|1)} \frac{\hd^2\left(P^{(1)}_{Y_i},P^{(2)}_{Y_i}\right)}{\hd^2\left(P^{(0)}_{Y_i},P^{(1)}_{Y_i}\right)} < \infty.
    \]
\label{prop:propFiniteRatio}
\end{proposition}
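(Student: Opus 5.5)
The plan is to prove a \emph{pointwise} (per output symbol) comparison of the two Hellinger integrands, with a constant depending only on $\theta$, and then sum over $y$. Write $w_0=W_{Y_i|X_i}(\cdot|0)$ and $w_1=W_{Y_i|X_i}(\cdot|1)$. In the reduced setting $\cH_0=\{\Ber(0),\Ber(\theta)\}$, $\cH_1=\{\Ber(1)\}$ we have
\[
P^{(0)}_{Y_i}=w_0,\qquad P^{(1)}_{Y_i}=(1-\theta)w_0+\theta w_1,\qquad P^{(2)}_{Y_i}=w_1 .
\]
For a fixed $y$, set $a=w_0(y)$, $b=w_1(y)$ and $m=(1-\theta)a+\theta b$. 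Since $\hd^2(P,Q)=\sum_y(\sqrt{P(y)}-\sqrt{Q(y)})^2$ (up to the paper's normalisation, which cancels in the ratio), it suffices to show
\[
(\sqrt m-\sqrt b)^2\le C_\theta\,(\sqrt a-\sqrt m)^2 \qquad \text{for all } a,b\ge 0,
\]
with $C_\theta<\infty$ independent of $a$, $b$ and the alphabet $\cY_i$.

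To obtain this, rationalise both differences. Because $m-b=(1-\theta)(a-b)$ and $a-m=\theta(a-b)$, whenever $a\neq b$ we get
\[
\sqrt m-\sqrt b=\frac{(1-\theta)(a-b)}{\sqrt m+\sqrt b},\qquad \sqrt a-\sqrt m=\frac{\theta(a-b)}{\sqrt a+\sqrt m}.
\]
Both denominators are positive here, since $a\neq b$ forces $a+b>0$ and hence $m>0$ when $\theta\in(0,1)$. So the ratio of the $y$-th terms equals
\[
\Big(\frac{1-\theta}{\theta}\Big)^2\Big(\frac{\sqrt a+\sqrt m}{\sqrt m+\sqrt b}\Big)^2 .
\]
Now $m\ge(1-\theta)a$ gives $\sqrt a\le \sqrt m/\sqrt{1-\theta}$, so
\[
\sqrt a+\sqrt m\le\Big(1+\tfrac{1}{\sqrt{1-\theta}}\Big)\sqrt m\le\Big(1+\tfrac{1}{\sqrt{1-\theta}}\Big)(\sqrt m+\sqrt b).
\]
Hence one can take
\[
C_\theta=\Big(\frac{1-\theta}{\theta}\Big)^2\Big(1+\frac{1}{\sqrt{1-\theta}}\Big)^2<\infty .
\]
When $a=b$ both terms vanish, so the inequality holds trivially.

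Summing over $y$ gives $\hd^2(P^{(1)}_{Y_i},P^{(2)}_{Y_i})\le C_\theta\,\hd^2(P^{(0)}_{Y_i},P^{(1)}_{Y_i})$. It remains to check that the ratio is well defined, i.e.\ the denominator is nonzero. Since $w_0\neq w_1$ there is some $y$ with $a\neq b$, and then $a-m=\theta(a-b)\neq0$, so that term of $\hd^2(P^{(0)}_{Y_i},P^{(1)}_{Y_i})$ is strictly positive. Hence the supremum is at most $C_\theta$, which proves the claim and gives $\lambda\ge 1/C_\theta>0$ in Proposition~\ref{prop:privCorrTradeoff}.

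The only real obstacle is the danger that the ratio blows up when $w_0$ and $w_1$ are close, or when some $a$ is near $0$. The rationalisation handles the first: it cancels the common factor $(a-b)$ exactly, leaving a bounded quotient. The inequality $m\ge(1-\theta)a$ handles the second, and this is where the assumption $\theta<1$ is used.
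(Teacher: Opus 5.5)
Your proof is correct, and it takes a genuinely different and much shorter route than the paper.

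\textbf{The paper's route.} The paper works with binary-output channels only. It parametrizes $P^{(2)}_Y=\Ber(a)$, $P^{(1)}_Y=\Ber(a+(1-\theta)c)$, $P^{(0)}_Y=\Ber(a+c)$, where $a=W(1|1)$, and treats the boundary case $a+c=1$ separately. A trigonometric substitution and derivative computation then show that the ratio $f(a,c)$ is non-increasing in $a$. It next shows that $f(0,c)$ is non-increasing in $c$, and bounds everything by $\lim_{c\downarrow 0}f(0,c)=1/(\sqrt{k}-1)^2$ with $k=1/(1-\theta)$. To reach general finite output alphabets it then needs a separate three-step reduction: mixing mass into one symbol, monotonicity of $H(\gamma)$, and merging symbols with equal likelihood ratios.

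\textbf{What your route does differently.} You argue symbol by symbol. The identities $m-b=(1-\theta)(a-b)$ and $a-m=\theta(a-b)$ share the factor $a-b$, so rationalising cancels it exactly. The bound $m\ge(1-\theta)a$ then controls what remains. This handles every output alphabet at once. It therefore covers both the binary condition $W(1|0)\neq W(1|1)$ in the statement and the general condition $W(\cdot|0)\neq W(\cdot|1)$ needed in Proposition~\ref{prop:privCorrTradeoff}, and it makes the paper's reduction (i)--(iii) unnecessary.

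\textbf{Sharpness.} Your argument loses nothing quantitatively. Your constant is
$C_\theta=\bigl(\frac{1-\theta}{\theta}\bigr)^2\bigl(1+\frac{1}{\sqrt{1-\theta}}\bigr)^2=\frac{(1-\theta)(1+\sqrt{1-\theta})^2}{\theta^2}$,
which equals the paper's $1/(\sqrt{k}-1)^2$ exactly. Your per-symbol inequality holds with equality on any symbol with $w_1(y)=0<w_0(y)$. This explains why the paper's extremal configuration is $W(1|1)=0$ with $c\downarrow 0$. Combined with that limit, it shows $C_\theta$ is exactly the supremum.

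The paper's calculus adds only the monotonicity structure of the binary ratio in $(a,c)$, which the proposition does not need.
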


\emph{Proof sketch.} For a channel $W_{Y|X}$ with $W_{Y|X}(1|0) = q_0$ and $W_{Y|X}(1|1) = q_1,~q_0 > q_1$, WLOG, we can write
    \begin{align*}
        P_{Y}^{(2)} & \equiv \Ber(a),\,P_{Y}^{(1)} \equiv \Ber(a + c'),\,
        P_{Y}^{(0)}  \equiv \Ber(a+c),
    \end{align*}
    where $a = q_1,\, c = q_0-q_1,$ and $c' = (1-\theta)c$. Define
    \begin{align*}
        f(a,c) & := \frac{\hd^2\left(P^{(1)}_{Y},P^{(2)}_{Y}\right)}{\hd^2\left(P^{(0)}_{Y},P^{(1)}_{Y}\right)}                                        \\
               = &\frac{1-\sqrt{a(a+c')}-\sqrt{(1-a)(1-a-c')}}{1-\sqrt{(a+c')(a+c)}-\sqrt{(1-a-c')(1-a-c)}}.
    \end{align*}
     The proof is essentially analytical; it entails showing that $f(a,c)$ is non-increasing in $a$ for a fixed $c$, and showing that $f(0,c)$ is decreasing in $c$ whereby 
    
     \[
     f(a,c) \le f(0,c) \le \lim_{c \downarrow 0} f(0,c) < \infty. 
     \]
\hfill $\square$\\
Finally, we show that for channels with finite output alphabet ($|\cY| < \infty$), the supremum of squared Hellinger distance ratios of the output distributions does not exceed that of the output distributions generated by channels with binary output alphabet.

For $\cY = [k]$, let $W_{Y|X}$ be a channel such that
\[ 
W_{Y|X}(i|0)=(\alpha^{(i)}_0)_{i=1}^k \text{ and } W_{Y|X}(i|1)=(\alpha^{(i)}_1)_{i=1}^k, 
\]
where 
\[
\frac{\alpha^{(i)}_0}{\alpha^{(i)}_1} \le \frac{\alpha^{(j)}_0}{\alpha^{(j)}_1} \; \forall i\le j.
\]
We do this in three steps:
\begin{itemize}
    \item[i.] We consider the transformation $W \rightarrow W^{\gamma}$ for $\gamma\in(0,1)$ such that, for $b\in\{0,1\}$,
\begin{align*}
    P_{W^{\gamma}}(Y=i|b)=
    \begin{cases}
        (1-\gamma)\cdot \alpha^{(1)}_b + \gamma & \text{for} \ i=1         \\
        (1-\gamma)\cdot \alpha^{(i)}_b          & \text{for} \ i=2,\ldots,k
    \end{cases}
\end{align*}
and prove that there exists $\gamma^*\in(0,1)$ such that 
    \begin{align*}
        \frac{(1-\gamma)\cdot \alpha_0^{(1)}+\gamma}{(1-\gamma)\cdot \alpha_1^{(1)}+\gamma}\le \frac{\alpha_0^{(2)}}{\alpha_1^{(2)}}~;~ \gamma\in[0,\gamma^*]
    \end{align*}
    with equality at $\gamma = \gamma^*$.
    \item[ii.] Next, we show that the squared Hellinger distance ratio for the output distributions generated by $W^\gamma$ is upper bounded by that of the output distribution generated by $W$.
    \item[iii.] Finally, we show that the squared Hellinger ratio of the output distributions does not change if we ``merge'' two symbols with the same $\alpha_0/\alpha_1$ ratio.
\end{itemize}
By repeating this process in {\em{(i)-(iii)}}, we can reduce the alphabet size to obtain a  binary channel.
    \hfill $\square$

\subsection*{Proof sketch of Corollary~\ref{corr:collinear}}
   A detailed proof is provided in 
\if \extended 1%
Appendix.
\fi
\if \extended 0%
\cite[Appendix]{sdht2026}.
\fi The ``if" direction is shown by presenting a channel $W_{Y|X}$ that maps $\mu_0,\mu_1$ to the same distribution and $\mu_2$ to a distinct distribution whenever such $\lambda,a,b,c$ do not exist, and then appealing to \Cref{prop:lp}.

    To see the ``only if" direction, suppose $\theta\mu_2 + (1-\theta)\mu_0 = \mu_1$ for some $\theta \in [0,1]$. 
    By employing the same reduction considered in the beginning of case where $p_0<p_1<p_2$ in the proof of \Cref{thm:impossibility},
    we can argue that $\cH_0$ and $\cH_1$ are securely distinguishable only if so are hypotheses $\{\Ber(0),\Ber(\theta)\}$ and $\{\Ber(1)\}$.
    The result now follows from \Cref{thm:impossibility}.
    
    A similar argument can be used when $\theta,a,b,c$ exists as stated in the corollary for any distinct $a,b,c$.\hfill$\square$

\subsection*{Proof Sketch of Theorem~\ref{thm:achievability}}

    A detailed proof is provided in 
    \if \extended 1%
Appendix.
\fi
\if \extended 0%
\cite[Appendix]{sdht2026}.
\fi As discussed in the introduction, such a scheme can be obtained for $\{\Ber(p_0),\Ber(p_1)\}$ vs $\{\Ber(p_2)\}$, using a single bit of shared key if $p_0=1-p_1$ and $p_2=1/2$.
    This can be straightforwardly adapted for the case where $p_1=1-p_0$ and $p_2\notin\{p_0,p_1\}$.
    We build $W_{Y|X}$ that maps $\Ber(p_0),\Ber(p_1)$ and $\Ber(p_2)$ to $\Ber(p),\Ber(1-p)$ and $\Ber(q)$, respectively, such that $q\notin\{p,1-p\}$ whenever $p_0,p_1,p_2$ are distinct.
    This reduces testing $\{\Ber(p_0),\Ber(p_1)\}$ vs. $\{\Ber(p_2)\}$ to testing $\{\Ber(p),\Ber(1-p)\}$ vs. $\{\Ber(q)\}$.

    For any finite domain $\cX$, we build a deterministic map $W_{Y|X}:\cX\to\{0,1\}$ such that $W_{Y|X}\circ \mu_0,\;W_{Y|X}\circ \mu_1$ and $W_{Y|X}\circ \mu_2$ are distinct Bernoulli distributions whenever $\mu_2,\mu_1,\mu_2$ are distinct distributions over $\cX$.

\subsection*{Proof of Theorem~\ref{thm:psm}}
We will require the following definitions in the proof.
\begin{definition}
    A Boolean function $f:\cX^n\to\{0,1\}$ is symmetric if for any $(x_1,\ldots,x_n)\in\cX^n$ and any permutation $\sigma$ on $[n]$, $f(x_1,\ldots,x_n) = f(x_{\sigma(1)},\ldots,x_{\sigma(n)})$.
\end{definition}
\begin{definition}
    Let $f:\cX^n\to\{0,1\}$ be an $n$-variate Boolean function. A private simultaneous messages (PSM) protocol for $f$ using secret key $K$ uniformly distributed over a domain $\mathcal{K}$ is a tuple $\left(\{W_{Y_i|X_i,K}\}_{i=1}^n,\phi\right)$ where each $W_{Y_i|X_i,K}$ is a channel with output alphabet $\cY_i$, and $\phi$ is a deterministic function $\det:\prod_{i=1}^n \cY_i \to \{0,1\}$ satisfying the following properties with respect to
    \[ P_{X^nKY^n} = \prod_{i=1}^n P_X \cdot P_K \cdot \prod_{i=1}^n W_{Y_i|X_i,K}. \]
    where $P_K$ and $P_X$ are uniform distributions over $\mathcal{K}$ and $\cX$, respectively.

    \paragraph{Correctness} For all $(x_1,\ldots,x_n)\in\cX^n$ and $k\in\mathcal{K}$, 
    \[ \Pr[\det(Y_1,\ldots,Y_n) = f(x_1,\ldots,x_n) | X^n=x^n, K=k] = 1.\]

    \paragraph{Privacy} For any $(x_1,\ldots,x_n),(\tilde{x}_1,\ldots,\tilde{x}_n)\in\cX^n$ such that $f(x_1,\ldots,x_n) = f(\tilde{x}_1,\ldots,\tilde{x}_n)$, 
    \[ P_{Y^n|X^n=x^n} = P_{Y^n|X^n=\tilde{x}^n}.\]

    The communication incurred is $\sum_{i=1}^n|\cY_i|$ bits. The secret key size is $\log|\mathcal{K}|$ bits.
\end{definition}

We will use the following result from \cite{ES25} on the existence of PSM protocols for symmetric functions.
\begin{theorem}[Theorem 1 in \cite{ES25}]\label{thm:ES25}
    For any symmetric function $f:\cX^n\to\{0,1\}$, there exists a PSM protocol for $f$ using that incurs $O(n^{\delta})$ bits of communication and $O(n^\delta)$ key length, where $\delta=2\lceil |\cX|/3 \rceil + 1$.
\end{theorem}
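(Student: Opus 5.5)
The statement is a cited result (Theorem~1 of \cite{ES25}), so the plan below is a route to reconstruct it rather than a transcription of their argument. It starts from the fact that a symmetric $f:\cX^n\to\{0,1\}$ depends only on the type, i.e.\ the count vector $c=(c_x)_{x\in\cX}\in\{0,\ldots,n\}^{|\cX|}$ with $\sum_x c_x=n$. Each party $i$ encodes $x_i$ as the indicator vector $e_{x_i}$, so $c=\sum_i e_{x_i}$. The task then becomes a PSM protocol for a function $g(c)=f$ of a sum of party-held group elements. For the additive part I would use the standard masking trick. The parties share uniform vectors $r_1,\ldots,r_n$ over $\mathbb{Z}_{n+1}^{|\cX|}$ summing to a secret shift $s$. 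Party $i$ sends $e_{x_i}+r_i$, so the server sees only $c+s$, which is uniform and independent of $c$.

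The remaining task is to let the server evaluate $g(c)$ from $c+s$ without learning anything else. Concretely, some party must release an encoding of the shifted function $y\mapsto g(y-s)$ that hides everything except its value at the single point $y=c+s$. A plain truth table would leak the shift-invariant structure of $g$, so I would instead express the evaluation of $y\mapsto g(y-s)$ as a read-once branching program and apply the Ishai--Kushilevitz randomized encoding for branching programs. The perfect-privacy guarantee of that encoding is exactly the PSM privacy condition, and correctness is deterministic. To save communication I would additionally fix $c_{x_0}=n-\sum_{x\ne x_0}c_x$ for one symbol $x_0$.

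The exponent $2\lceil|\cX|/3\rceil+1$ has to come from how $g$ is laid out as a branching program. My plan is to partition $\cX$ into $\lceil|\cX|/3\rceil$ blocks of at most three symbols. For each block, I would aim to process the block's counts through a layer whose state space has size $O(n^2)$ rather than the naive $O(n^3)$. The intended mechanism is that within a block only two coordinates need to be carried explicitly, with the third recovered from a running total that is kept as a global counter. With $\lceil|\cX|/3\rceil$ such layers, the program's width-times-length should be bounded by about $n^{2\lceil|\cX|/3\rceil}$. The randomized encoding then costs an extra factor that I hope to keep at $O(n)$, since the encoding matrices have one row per layer state and there are $n$ parties, giving communication and key length $O(n^{2\lceil|\cX|/3\rceil+1})$. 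Perfect privacy follows from composing the masking step with the perfectly private encoding, and correctness is exact.

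The main obstacle is the per-block saving from $n^3$ to $n^2$. Making it rigorous requires a genuinely clever encoding of three counts by two coordinates plus shared global information, without destroying the shift-invariance needed for the masking step. If that attempt fails, I would fall back to grouping the counts in pairs, which gives the weaker exponent of about $|\cX|$, and I would then consult the specific construction of \cite{ES25} for the triple-compression step.
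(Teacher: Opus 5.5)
The paper gives no proof of this statement; it imports it from \cite{ES25} as a black box. Your proposal does not supply one either. The missing piece is exactly the content of the theorem: the exponent $2\lceil|\cX|/3\rceil+1$ is the whole point, and the step meant to produce it (carrying only two coordinates per block of three symbols) is left open.

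The mechanism you sketch for that step does not work. A single global running total recovers only one count overall, namely the $c_{x_0}$ you already eliminated, not one count per block. Recovering each block's third count would require carrying each block total, which puts the third coordinate back.

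More fundamentally, no branching-program layout can reach the bound through an Ishai--Kushilevitz-type encoding. That encoding's matrices are indexed by program nodes on both sides, so its cost is at least of order $(\text{length})\times(\text{width})^2$, not ``rows times $O(n)$'' as in your accounting. With length $n$, reaching $O(n^{2\lceil|\cX|/3\rceil+1})$ would therefore need width $n^{\lceil|\cX|/3\rceil}$, i.e.\ $O(n)$ states per block of three symbols. But consider a deterministic layered program reading one party's symbol per layer. After $n/2$ layers it must separate all partial count vectors that induce distinct residual functions. For a generic (e.g.\ random) symmetric $f$ there are $\Theta(n^{|\cX|-1})$ of these, which forces width $\Omega(n^{|\cX|-1})$. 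This exceeds $n^{\lceil|\cX|/3\rceil}$ for every $|\cX|\ge 3$. It even exceeds your own target $n^{2\lceil|\cX|/3\rceil}$ when $|\cX|=6$. So the saving in \cite{ES25} has to come from a mechanism other than a smaller branching program, and your outline does not contain it.

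The front end also does not compose as you describe. No party knows $y=c+s$: all parties know $s$, but none knows $c$. So nobody can compute an encoding of $y\mapsto g(y-s)$ at that point. Anything the server could evaluate at a point of its own choosing leaks exactly as the truth table does. If instead the program reads the individual messages $e_{x_i}+r_i$, with $s$ as an extra input, the mask becomes redundant. You are then back to the plain count program of width $\Theta(n^{|\cX|-1})$, whose encoding costs roughly $n^{2|\cX|-1}$. That corrected route does give polynomial communication for fixed $|\cX|$. This suffices for the qualitative conclusion the paper draws from Theorem~\ref{thm:psm}, but not for the stated $\delta$. The triple-compression step you defer to \cite{ES25} is the missing idea, not a detail to be filled in.
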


    There exists a PSM $\left(\{W_{Y_i|X_i,K}\}_{i=1}^n,\phi\right)$ for $\det$ with $O(n^{\delta})$ bits of communication and $O(n^\delta)$ key length, where $\delta=2\lceil |\cX|/3 \rceil + 1$. This follows from \Cref{thm:ES25} since $\det$ is symmetric.
    We will argue that $\left(\{W_{Y_i|X_i,K}\}_{i=1}^n,\phi\right)$ is an $(\varepsilon,\varepsilon)$-private distributed hypothesis test for $\cH_0$ vs. $\cH_1$ using $n$ samples with the same communication and key length.

    For any $b\in\{0,1\}$ and $\mu\in\cH_b$, when $X_1,\ldots,X_n$ are i.i.d. according to $\mu$, by the correctness of the PSM protocol, the output of the test is $\det(X_1,\ldots,X_n)$. But then, since $\det$ is a $(1-\varepsilon)$-correct hypothesis test for $\cH_0$ vs. $\cH_1$, the output of the test is $b$ with probability at least $1-\varepsilon$. This establishes the correctness of the distributed hypothesis test.

    Next, when $X_1,\ldots,X_n$ are i.i.d. according to $\mu$, define 
    \begin{align}\label{eq:psm-distribution}
        P_{X^nKY^n} = \mu^{\otimes n} \cdot P_K \cdot \prod_{i=1}^n W_{Y_i|X_i,K}.
    \end{align}
    For any $\mu$, when $X_1,\ldots,X_n$ are i.i.d. according to $\mu$, suppose $\det(X_1,\ldots,X_n)=0$ with probability $\alpha$. Let $P_{Y^n}$ and $P_{Y^n|X^n}$ be defined with respect to the distribution $P_{X^nKY^n}$ in \eqref{eq:psm-distribution}. Fix any $(x_1,\ldots,x_n)\in\cX^n$ such that $\det(x_1,\ldots,x_n)=0$ and any $(\tilde{x}_1,\ldots,\tilde{x}_n)\in\cX^n$ such that $\det(\tilde{x}_1,\ldots,\tilde{x}_n)=0$. Then,
    \[ P_{Y^n} = \begin{cases}
        P_{Y^n|X^n=x^n} & \text{ with probability } \alpha, \\
        P_{Y^n|X^n=\tilde{x}^n} & \text{ with probability } 1-\alpha.
    \end{cases}\]
    Since $\det$ is a $(1-\varepsilon)$-correct hypothesis test for $\cH_0$ vs. $\cH_1$, for any $\mu\in\cH_0$, $\alpha\geq 1-\varepsilon$. Hence,
    $$\dtv(P_{Y^n},P_{Y^n|X^n=x^n}) \leq \varepsilon.$$
    Similarly, for any $\mu\in\cH_1$, $\alpha\leq \varepsilon$. Hence,
    $$\dtv(P_{Y^n},P_{Y^n|X^n=\tilde{x}^n}) \leq \varepsilon.$$ This establishes $\varepsilon$-privacy of the distributed hypothesis test. The communication and key length of the distributed hypothesis test are the same as that of the PSM protocol, which are $O(n^{\delta})$ bits and $O(n^\delta)$ bits, respectively. \hfill $\square$

\section{Discussion and Future Work}
For arbitrary hypothesis classes, (i) we need a shared secret key of substantial length, and (ii) it needs to be kept completely secret from the server.

When privacy is required only against a computationally bounded server, the former concern can be addressed by using a pseudorandom generator to stretch a short seed into a long pseudorandom string that can be used as the shared secret key. 
Addressing the latter concern is an interesting and important direction for future work. 
The impossibility result in Theorem~\ref{thm:impossibility} extends to the case where the shared randomness is completely known to the server; this can be shown by a standard averaging argument.

A natural way to bypass this impossibility is to provide a key $K_i$ to client $i$ such that $(K_1,\ldots,K_n)$ is distributed according to some publicly known joint distribution $P_{K_1\cdots K_n}$, and the server is allowed to collude with up to $t$ clients. This reveals the key and sample of each colluding client $i$ to the server.
Since the server gets $t$ independent samples from the test distribution from such collusions, the privacy error is lower bounded by the statistical distance between $t$ samples from any pair of distributions in $\cH_0$ (or $\cH_1$).
Nevertheless, when $t$ is small, it may be possible to design SDHT schemes with desirable correctness and privacy guarantees.
A good candiate for constructing such tests is the notion of \emph{non-interactive secure multiparty computation} (NIMPC) introduced by Beimel et al.~\cite{BGIKMP14}.

\IEEEtriggeratref{16}

\bibliographystyle{IEEEtran}
\bibliography{bibliofile}

\clearpage
\if \extended 1%
\appendix
\section*{Proof of \Cref{prop:lp}}
    Let $\det:\prod_{i=1}^n\cY_i \to \{0,1\}$ be a $(1-\epsilon)$-correct hypothesis test for $P^{(0)}_Y$ vs. $P^{(1)}_Y$ using $n$ samples for appropriate $\epsilon>0$. Then, it is easy to see that $\left(\{W_{Y_i|X_i}\}_{i=1}^n,\det\right)$ where $W_{X_i|Y_i}=W_{X|Y}$ for each $i$ is a $(\epsilon,0)$ private distributed hypothesis test for $\cH_0$ vs. $\cH_1$ by the property of $W_{Y|X}$. An optimal hypothesis test $\det$ will achieve correctness error $\epsilon = e^{-\Omega(n)}$.    

\section*{Missing Details in Proof of Theorem~\ref{thm:impossibility}}
We first prove Proposition~\ref{prop:propFiniteRatio} and formalize the arguments outlined in items \textit{(i)-(iii)} that appear after it.
\subsection*{Proof of Proposition~\ref{prop:propFiniteRatio}}
For a channel $W_{Y|X}$ with $W_{Y|X}(1|0) = q_0$ and $W_{Y|X}(1|1) = q_1,~q_0 > q_1$, WLOG, we can write
    \begin{align*}
        P_{Y}^{(2)} & \equiv \Ber(a),\,P_{Y}^{(1)} \equiv \Ber(a + c'),\,
        P_{Y}^{(0)}  \equiv \Ber(a+c),
    \end{align*}
    where $a = q_1,\, c = q_0-q_1,$ and $c' = (1-\theta)c$. Define
    \begin{align*}
        f(a,c) & := \frac{\hd^2\left(P^{(1)}_{Y},P^{(2)}_{Y}\right)}{\hd^2\left(P^{(0)}_{Y},P^{(1)}_{Y}\right)}                                        \\
               = &\frac{1-\sqrt{a(a+c')}-\sqrt{(1-a)(1-a-c')}}{1-\sqrt{(a+c')(a+c)}-\sqrt{(1-a-c')(1-a-c)}}\\
               :=&\frac{N}{D}.
    \end{align*}
    We seek to show that for all choices of (non-degenerate) $W_{Y|X}$, the ratio defined by $f(a,c)$ is bounded. We handle a boundary case first. If $a+c=1$,  then $q_0 = 1$ and $c = 1-q_1$. For this configuration, we have
\begin{align*}
        &f(a,c) \\
        ={}&\frac{1-\sqrt{a(a+c')}-\sqrt{(1-a)(1-a-c')}}{1-\sqrt{a+c'}}      \\
        ={}&\frac{1-\sqrt{(1-c)(1-c+c')}-\sqrt{c(c-c')}}{1-\sqrt{1-c+c'}}    \\
        ={}&\frac{1-\sqrt{(1-c)(1-c+(1-\theta)c)}-\sqrt{c(c-(1-\theta)c)}}{1-\sqrt{1-c+(1-\theta)c}} \\
        ={}&\frac{1-\sqrt{(1-c)(1-\theta c)}-\sqrt{c^2\theta}}{1-\sqrt{1-\theta c}}  \\
        ={}& \sqrt{1-c} + \frac{1-\sqrt{1-c}-c\sqrt{\theta}}{1-\sqrt{1-\theta c}}\\
        \le{}& \sqrt{1-c} + \frac{(1-\sqrt{1-c}-c\sqrt{\theta})(1+\sqrt{1-\theta c})}{\theta c}\\
        \le{}&\sqrt{1-c} + \frac{(c-c\sqrt{\theta})\cdot2}{\theta c}\\
        \le{}& 1 + \frac{2(1-\sqrt{\theta})}{\theta}
    \end{align*}
    which is bounded above for a fixed $\theta \in (0,1)$.

    Next, we show that $f(a,c)$ is non-increasing in $a \in (0,1-c)$ for a fixed $c \in (0,1)$, so that we can bound $f(a,c)$ by $f(0,c)$.  The partial derivatives w.r.t. $a$ are given by
    \begin{align*}
        \frac{\partial f}{\partial a} & = \frac{\partial N}{\partial a} \cdot D - \frac{\partial D}{\partial a} \cdot N.
    \end{align*}
    To show that $f(a,c)$ is non-increasing, we show that
    \[
     \frac{\partial f}{\partial a} \le 0 \Longleftrightarrow \frac{\partial N}{\partial a} \cdot \frac{1}{N} \le \frac{\partial D}{\partial a} \cdot \frac{1}{D}.
    \]
   Define, for $b \neq c'$,
\begin{align}
    &g_{a,c'}(b) \nonumber \\
    &= \frac{\frac{\partial }{\partial a}\left(1-\sqrt{(a+b)(a+c')}-\sqrt{(1-a-b)(1-a-c')}\right)}{1-\sqrt{(a+b)(a+c')}-\sqrt{(1-a-b)(1-a-c')}} \nonumber \\
    &= \left( 1-\sqrt{(a+b)(a+c')} - \sqrt{(1-a-b)(1-a-c')} \right)^{-1} \nonumber \\
    &\quad \times \left( -\frac{2a+b+c'}{2\sqrt{(a+b)(a+c')}} + \frac{2-2a-b-c'}{2\sqrt{(1-a-b)(1-a-c')}} \right). \nonumber
\end{align}
Substituting $p = a+b$ and $q = a+c'$, the expression simplifies to
\begin{align}
    &g_{a,c'}(b) \nonumber\\
    &= \frac{-\frac{p+q}{2\sqrt{pq}}+\frac{2-p-q}{2\sqrt{(1-p)(1-q)}}}{1-\sqrt{pq}-\sqrt{(1-p)(1-q)}} \nonumber \\
    &= \frac{(\sqrt{(1-p)(1-q)}-\sqrt{pq})(1+\sqrt{(1-p)(1-q)}+\sqrt{pq})}{-2\sqrt{pq(1-p)(1-q)}} \label{eqn:simplificationprop} \\
    &:= h_q(p), \nonumber
\end{align}
where the equality in \eqref{eqn:simplificationprop} is proved in Lemma~\ref{lemma1} towards the end of this section on page~\pageref{lemma1}.

Note that $h_q(p)$ is well-defined for the case $p=q$ (that is, $b = c'$) as well. To show $g_{a,c'}(0)\leq g_{a,c'}(c)$, it suffices to prove that $h_{q}(p)$ is monotonically increasing in $p$ for a fixed $q$; that is $\frac{\partial h_q}{\partial p} \ge 0$.

Let $p=\sin^2\alpha$, $q=\sin^2\beta$ where $\alpha \in [0,\frac{\pi}{2}]$. The case where $\alpha = 0$ corresponds to $p = a + b = 0$ for  $b \in [0,c]$, in particular, $a = 0$, and this case is handled separately, later. The case where $\alpha = \pi/2$ corresponds to $p = a + b = 1$ where $b \in [0,c]$. This implies that $a + c = 1$, a case that has been handled separately earlier. So, we restrict to $\alpha \in \left(0,\frac{\pi}{2}\right)$.

To show that $h_{q}(p)$ is monotonically increasing in $p$ for a fixed $q$, it suffices to show that for the function $$f_\beta(\alpha) := h_{\sin^2\beta}(\sin^2\alpha),$$ the partial derivative $\frac{\partial f_\beta}{\partial \alpha}\geq 0$, since
\[
\frac{\partial h_q}{\partial p}=\frac{\partial f_\beta}{\partial \alpha} \frac{\partial \alpha}{\partial p}=\frac{\partial f_\beta}{\partial \alpha}\frac{1}{\sin{2\alpha}}
\]
and $\sin{2\alpha}\geq 0$ for $\alpha\in\left(0,\frac{\pi}{2}\right)$. We have
\begin{align*}
        f_\beta(\alpha) 
        ={}& \frac{-1}{2}\frac{(\sqrt{(1-\sin^2\alpha)(1-\sin^2\beta)}-\sqrt{\sin^2\alpha \sin^2\beta})}{\sqrt{\sin^2\alpha\sin^2\beta(1-\sin^2\alpha)(1-\sin^2\beta)}} \\
        &{}\cdot (1+\sqrt{(1-\sin^2\alpha)(1-\sin^2\beta)}+\sqrt{\sin^2\alpha \sin^2\beta})\\
        ={}&\frac{-2\cos{(\alpha+\beta)}(1+\cos{(\alpha-\beta)})}{\sin{2\alpha}\sin{2\beta}}.
    \end{align*}
The partial derivative of $f_\beta$ with respect to $\alpha$ satisfies
\begin{align*}
        &{}\frac{1}{2}\sin^2{2\alpha}\sin{2\beta}\frac{\partial f_\beta}{\partial \alpha}\\
        ={}&-\sin{2\alpha}\Big(-\sin{(\alpha+\beta)}-\sin{2\alpha})+2\cos{2\alpha}(\cos{(\alpha+\beta)}\\
        &{}+\frac{1}{2}(\cos{2\alpha}+\cos{2\beta})\Big)                    \\
        ={}&\sin{2\alpha}\sin{(\alpha+\beta)}+\sin^2{2\alpha}+2\cos{2\alpha}\cos{(\alpha+\beta)}\\
        &{}+\cos{2\alpha}(\cos{2\alpha}+\cos{2\beta})                      \\
         ={}&\cos{(\alpha-\beta)}-\cos{2\alpha}\cos{(\alpha+\beta)}+\sin^2{2\alpha}\\
         &{}+2\cos{2\alpha}\cos{(\alpha+\beta)}+\cos^2{2\alpha}+\cos{2\alpha}\cos{2\beta} \\
         ={}&\cos{(\alpha-\beta)}+\cos{2\alpha}\cos{(\alpha+\beta)}+\sin^2{2\alpha}\\
         &{}+\cos^2{2\alpha}+\cos{2\alpha}\cos{2\beta}                                    \\
         ={}&1+\cos{(\alpha-\beta)}+\cos{2\alpha}\cos{2\beta}+\cos{2\alpha}\cos{(\alpha+\beta)}                                                                  \\
         ={}&(1+\cos{2\alpha}\cos{2\beta})+(\cos{(\alpha-\beta)}+\cos{2\alpha}\cos{(\alpha+\beta)}).
    \end{align*}
    The first term $1+\cos{2\alpha}\cos{2\beta}\geq 0$, and we show that the second term is also non-negative:
    \begin{align*}
        &{}\cos{(\alpha-\beta)}+\cos{2\alpha}\cos{(\alpha+\beta)}\\
         ={}&(\cos{\alpha}\cos{\beta}+\sin{\alpha}\sin{\beta})\\
         &{}+(2\cos^2{\alpha}-1)(\cos{\alpha}\cos{\beta}-\sin{\alpha}\sin{\beta})                    \\                                          ={}&\cancel{\cos{\alpha}\cos{\beta}}+\sin{\alpha}\sin{\beta}+2\cos^3{\alpha}\cos{\beta}\\
         &{}-2\cos^2{\alpha}\sin{\alpha}\sin{\beta}\cancel{-\cos{\alpha}\cos{\beta}}+\sin{\alpha}\sin{\beta} \\
         ={}&2\sin{\alpha}\sin{\beta}+2\cos^3{\alpha}\cos{\beta}-2\cos^2{\alpha}\sin{\alpha}\sin{\beta}                                                                                          \\
         ={}&2\cos^3{\alpha}\cos{\beta}+2\sin{\alpha}\sin{\beta}(1-\cos^2{\alpha})                                                                                                               \\
         ={}&2\cos^3{\alpha}\cos{\beta}+2\sin^3{\alpha}\sin{\beta}                                                                                                                               \\
         \geq{}& 0,
    \end{align*}
    where the last inequality holds because $\sin{\alpha},\cos{\alpha}\geq 0$ for $\alpha\in(0,\frac{\pi}{2})$. Thus, we have established that $f(a,c)$ is non-increasing in $a \in (0,1-c)$ for a fixed $c \in (0,1)$ and hence
    \begin{align}
        f(a,c) &\le f(0,c)\nonumber\\
        &=\frac{1-\sqrt{c'}-\sqrt{1-c'}}{1-\sqrt{c'c}-\sqrt{(1-c')(1-c)}}\nonumber\\
        &=\frac{1-\sqrt{(1-\theta)c}-\sqrt{1-(1-\theta)c}}{1-\sqrt{(1-\theta)c^2}-\sqrt{(1-(1-\theta)c)(1-c)}}\nonumber\\
        & =\frac{\sqrt{k}-c+\sqrt{(1-c)(k-c)}}{(\sqrt{k}+\sqrt{k-c})(\sqrt{k}-1)^2}\label{eqn:f(0,c)}
    \end{align}
    where $k = \frac{1}{1-\theta} > 1$. In Lemma~\ref{lemma2} towards the end of this section on page~\pageref{lemma2}, we show that $f(0,c)$ is non-increasing in $c \in (0,1)$. 
    Therefore,
    \begin{align*}
        f(a,c) \le f(0,c) \le \lim_{c \downarrow0}f(0,c) =\frac{1}{\left(\sqrt{k}-1\right)^2} < \infty.
    \end{align*}
\hfill $\square$

\subsection*{Formalizing the Arguments in \textit{(i)-(iii)}}
{\em{(i)}} Let $W_{Y|X}$ be a channel with input alphabet $\mathcal{X}=\{0,1\}$ and output alphabet $\mathcal{Y}=\{1,2,\ldots,k\}$ such that, for $i\in[1:k]$,
\begin{align*}
    W_{Y|X}(i|0) = \alpha^{(i)}_0 \text{ and} \
    W_{Y|X}(\cdot|1) = \alpha^{(i)}_1, 
\end{align*}
where 
\[
    \frac{\alpha^{(i)}_0}{\alpha^{(i)}_1} \le \frac{\alpha^{(j)}_0}{\alpha^{(j)}_1} \quad \forall i\le j. 
\]

We consider the transformation $W_{Y|X}\rightarrow W^{\gamma}_{Y|X}$ for $\gamma\in(0,1)$ such that, for $b\in\{0,1\}$,
\begin{align*}
    &W^{\gamma}_{Y|X}(i|b) = \\
    &\begin{cases}
        (1-\gamma)\cdot \alpha^{(1)}_b + \gamma & \text{for } i=1         \\
        (1-\gamma)\cdot \alpha^{(i)}_b          & \text{for } i=2,\ldots,n
    \end{cases} 
\end{align*}
We have the following lemma proved towards the end of this section on page~\pageref{proofoflemma1}.

\begin{lemma}\label{claim}
   Suppose $\frac{\alpha^{(1)}_0}{\alpha^{(1)}_1} < \frac{\alpha^{(2)}_0}{\alpha^{(2)}_1}\leq 1$. There exists $\gamma^*\in(0,1)$ such that 
    \begin{align*}
        \text{i.~}& \frac{(1-\gamma^*)\cdot \alpha_0^{(1)}+\gamma^*}{(1-\gamma^*)\cdot \alpha_1^{(1)}+\gamma^*} = \frac{\alpha_0^{(2)}}{\alpha_1^{(2)}}, \text{ and}\\
        \text{ii.~}& \frac{(1-\gamma)\cdot \alpha_0^{(1)}+\gamma}{(1-\gamma)\cdot \alpha_1^{(1)}+\gamma}\le \frac{\alpha_0^{(2)}}{\alpha_1^{(2)}}
    \end{align*}
    for $\gamma\in[0,\gamma^*]$.
\end{lemma}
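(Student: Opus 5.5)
The plan is to study the one-variable function
\[
r(\gamma) := \frac{(1-\gamma)\alpha_0^{(1)}+\gamma}{(1-\gamma)\alpha_1^{(1)}+\gamma}, \qquad \gamma\in[0,1],
\]
and use continuity together with monotonicity. We have $r(0)=\alpha_0^{(1)}/\alpha_1^{(1)}$ and $r(1)=1$.

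The first step is to check that $r$ is well defined. If $\alpha_1^{(1)}=0$, the ratio $\alpha_0^{(1)}/\alpha_1^{(1)}$ would be $+\infty$ (or undefined), which contradicts the hypothesis $\alpha_0^{(1)}/\alpha_1^{(1)}<\alpha_0^{(2)}/\alpha_1^{(2)}\le 1$. So $\alpha_1^{(1)}>0$. The denominator $(1-\gamma)\alpha_1^{(1)}+\gamma$ is then strictly positive on $[0,1]$, and $r$ is continuous there.

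The second step is monotonicity. Writing $r(\gamma)=(A+\gamma(1-A))/(B+\gamma(1-B))$ with $A=\alpha_0^{(1)}$ and $B=\alpha_1^{(1)}$, the quotient rule gives a numerator for $r'(\gamma)$ of
\[
(1-A)\bigl(B+\gamma(1-B)\bigr)-(1-B)\bigl(A+\gamma(1-A)\bigr)=B-A.
\]
Hence $r'(\gamma)=(B-A)/\bigl((1-\gamma)B+\gamma\bigr)^2$. Since $A/B<1$, we have $B>A$, so $r$ is strictly increasing on $[0,1]$. This is a linear-fractional map, so no delicate estimates are required.

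The third step applies the intermediate value theorem. Set $t:=\alpha_0^{(2)}/\alpha_1^{(2)}$, so that $r(0)<t\le 1=r(1)$. There is then a $\gamma^*\in(0,1]$ with $r(\gamma^*)=t$, and by strict monotonicity it is unique. Part ii follows immediately: for $\gamma\in[0,\gamma^*]$ we have $r(\gamma)\le r(\gamma^*)=t$. Solving the linear equation explicitly gives
\[
\gamma^*=\frac{tB-A}{(1-A)-t(1-B)}.
\]

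The main obstacle is the boundary case $t=1$, which forces $\gamma^*=1$, whereas the lemma asks for $\gamma^*\in(0,1)$. I would treat this case separately. When $t=1$, symbols $1$ and $2$ already satisfy the intended merge condition at $\gamma=1$. Alternatively, one can note that the application in step (iii) only needs $\gamma^*\in(0,1]$, and if strictness is required, take $t<1$: when $t=1$, all later symbols also have ratio at least $1$, so the reduction can instead be applied from the other end by symmetry, relabelling the inputs $0$ and $1$. For the main case $t<1$, the denominator $(1-A)-t(1-B)=(1-t)+(tB-A)$ is positive, so $\gamma^*<1$ follows directly from the explicit formula.
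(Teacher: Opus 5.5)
Your proof is correct and follows the paper's argument essentially step for step: the paper also shows that the derivative of this linear-fractional ratio has numerator $\alpha_1^{(1)}-\alpha_0^{(1)}>0$. It then solves explicitly for $\gamma^*=N/(M+N)$ with $N=\alpha_1^{(1)}\alpha_0^{(2)}-\alpha_0^{(1)}\alpha_1^{(2)}>0$ and $M=\alpha_1^{(2)}-\alpha_0^{(2)}\ge 0$, which is your formula multiplied through by $\alpha_1^{(2)}$ (and you are slightly more careful in justifying $\alpha_1^{(1)}>0$). Your boundary observation is accurate, and the paper's proof has the same limitation, since it only concludes $\gamma^*\in(0,1]$. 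When $\alpha_0^{(2)}=\alpha_1^{(2)}$, strict monotonicity forces $\gamma^*=1$, so the open-interval claim is false there; your relabelling remarks therefore address the downstream reduction rather than prove the lemma as stated.
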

{\em{(ii)}} We consider input distributions $\mu^{0}=\mathrm{Ber}(0), \mu^{1}=\mathrm{Ber}(\lambda), \mu^{2}=\mathrm{Ber}(1)$.
Let $(a^{(j)}_i)_{j=1}^k$ be the output distribution of $W$ when the input distribution is $\mu^{i}$, for $i=0,1,2$. For channel $W$, we are interested in the ratio
\[ H = \frac{\sum_{j=1}^k\left(\sqrt{a^{(j)}_2} - \sqrt{a^{(j)}_1}\right)^2 }{\sum_{j=1}^k\left(\sqrt{a^{(j)}_0} - \sqrt{a^{(j)}_1}\right)^2}.\]
For the channel $W^{\gamma}$, the related ratio $H(\gamma)$ is given in Equation \eqref{eq:H_gamma}.

\begin{figure*}[t]
\normalsize
\begin{align}
    H(\gamma) 
    &= \frac{\left(\sqrt{(1-\gamma) a^{(1)}_2 + \gamma} - \sqrt{(1-\gamma) a^{(1)}_1 + \gamma}\right)^2 + \sum_{j=2}^n \left(\sqrt{(1-\gamma) a^{(j)}_2} - \sqrt{(1-\gamma) a^{(j)}_1}\right)^2}
    {\left(\sqrt{(1-\gamma) a^{(1)}_0 + \gamma} - \sqrt{(1-\gamma) a^{(1)}_1 + \gamma}\right)^2 + \sum_{j=2}^n \left(\sqrt{(1-\gamma) a^{(j)}_0} - \sqrt{(1-\gamma) a^{(j)}_1}\right)^2} \nonumber \\[3ex]
    &= \frac{\left(\sqrt{a^{(1)}_2 + \frac{\gamma}{1-\gamma}} - \sqrt{a^{(1)}_1 + \frac{\gamma}{1-\gamma}}\right)^2 + \sum_{j=2}^n \left(\sqrt{a^{(j)}_2} - \sqrt{a^{(j)}_1}\right)^2}
    {\left(\sqrt{a^{(1)}_0 + \frac{\gamma}{1-\gamma}} - \sqrt{a^{(1)}_1 + \frac{\gamma}{1-\gamma}}\right)^2 + \sum_{j=2}^n \left(\sqrt{a^{(j)}_0} - \sqrt{a^{(j)}_1}\right)^2}. \label{eq:H_gamma}
\end{align}
\vspace*{2pt}
\hrulefill
\vspace*{4pt}
\end{figure*}

We show in Lemma~\ref{lemma:Hinc} towards the end of this section on page~\pageref{lemma:Hinc} that $H(\gamma)$ is strictly decreasing for $\gamma\in[0,\gamma^*]$.

{\em{(iii)}} Finally, to show that the squared Hellinger ratio does not change if two output symbols $i,j$ such that $\frac{\alpha_0^{(i)}}{\alpha_1^{(i)}}=\frac{\alpha_0^{(j)}}{\alpha_1^{(j)}}$, it suffices to show the following lemma, which is proved towards the end of this section on page~\pageref{proofoflemma2}.

\begin{lemma}\label{lemma:merge}
    Let $X_1 \sim \mathrm{Ber}(p)$ and $X_2 \sim \mathrm{Ber}(q)$ be passed through a channel $W_{Y|X}$ with output alphabet $\mathcal{Y}$. Let $i, j \in \mathcal{Y}$ be two distinct output symbols with
    \begin{align*}
        W_{Y|X}(i|0) &= \alpha_0^{(i)}, & W_{Y|X}(i|1) &= \alpha_1^{(i)}, \\
        W_{Y|X}(j|0) &= \alpha_0^{(j)}, & W_{Y|X}(j|1) &= \alpha_1^{(j)}.
    \end{align*}
    Let $P_Y^{(1)}$ and $P_Y^{(2)}$ denote the respective output distributions. Suppose a second channel merges symbols $i$ and $j$ into a supersymbol $ij$, such that the new output $Y^\prime$ is defined as
    \[
        Y^\prime = \begin{cases}
            ij & \text{if } Y \in \{i, j\} \\
            Y & \text{otherwise}
        \end{cases}.
    \]
    Let $P_{Y^\prime}^{(1)}$ and $P_{Y^\prime}^{(2)}$ denote the output distributions after merging. If $\frac{\alpha_0^{(i)}}{\alpha_1^{(i)}} = \frac{\alpha_0^{(j)}}{\alpha_1^{(j)}}$, then for any $p, q \in [0, 1]$,
    \[
        H^2\left(P_Y^{(1)}, P_Y^{(2)}\right) =  H^2\left(P_{Y^\prime}^{(1)}, P_{Y^\prime}^{(2)}\right).
    \]
\end{lemma}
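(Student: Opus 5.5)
We argue directly from the definition of the squared Hellinger distance, $\hd^2(P,Q)=\sum_{y}\bigl(\sqrt{P(y)}-\sqrt{Q(y)}\bigr)^2$. Write $r:=\alpha_0^{(i)}/\alpha_1^{(i)}=\alpha_0^{(j)}/\alpha_1^{(j)}$. The plan is to show that the two summands indexed by $i$ and $j$ in $\hd^2(P^{(1)}_Y,P^{(2)}_Y)$ add up to exactly the single summand indexed by the supersymbol $ij$ in $\hd^2(P^{(1)}_{Y'},P^{(2)}_{Y'})$. The summands for all other symbols are untouched by the merge, so this gives the equality.

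First, we use the proportionality to factor the output probabilities. Since $\alpha_0^{(i)}=r\alpha_1^{(i)}$, the input $\Ber(p)$ gives
\[
P^{(1)}_Y(i)=(1-p)\alpha_0^{(i)}+p\,\alpha_1^{(i)}=\alpha_1^{(i)}\bigl((1-p)r+p\bigr)=:\alpha_1^{(i)}s_p .
\]
Likewise $P^{(1)}_Y(j)=\alpha_1^{(j)}s_p$, and with $q$ in place of $p$, $P^{(2)}_Y(i)=\alpha_1^{(i)}s_q$ and $P^{(2)}_Y(j)=\alpha_1^{(j)}s_q$. After merging, $P^{(1)}_{Y'}(ij)=(\alpha_1^{(i)}+\alpha_1^{(j)})s_p$, and similarly for $q$.

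Second, we compute the contributions, using the fact that the factor is common within each symbol:
\[
\bigl(\sqrt{\alpha_1^{(i)}s_p}-\sqrt{\alpha_1^{(i)}s_q}\bigr)^2+\bigl(\sqrt{\alpha_1^{(j)}s_p}-\sqrt{\alpha_1^{(j)}s_q}\bigr)^2=(\alpha_1^{(i)}+\alpha_1^{(j)})\bigl(\sqrt{s_p}-\sqrt{s_q}\bigr)^2 .
\]
The right-hand side is exactly the merged-symbol term. Summing over all symbols then gives the claimed equality.

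The only real obstacle is the degenerate case where some $\alpha_1^{(\cdot)}=0$, in which the ratio is read as $r=\infty$ (the hypothesis then means $\alpha_1^{(i)}=\alpha_1^{(j)}=0$). We handle it symmetrically by factoring out $\alpha_0$ instead. We would phrase the proportionality as $(\alpha_0^{(i)},\alpha_1^{(i)})=t_i(u_0,u_1)$ and $(\alpha_0^{(j)},\alpha_1^{(j)})=t_j(u_0,u_1)$ for a common vector $(u_0,u_1)$ and scalars $t_i,t_j\ge 0$. Then every output probability for these two symbols is $t_\cdot$ times a quantity depending only on the input distribution, and the computation above goes through verbatim with $t_i+t_j$ in place of $\alpha_1^{(i)}+\alpha_1^{(j)}$.
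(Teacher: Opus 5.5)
Your proposal is correct and matches the paper's proof: both use the equal likelihood ratio to write each of the four probabilities as a symbol-dependent weight times a factor depending only on the input distribution. The $i$ and $j$ summands then combine into exactly the merged summand. The only differences are minor. You factor out $\alpha_1^{(\cdot)}$ where the paper factors out $\alpha_0^{(\cdot)}$ via $r=\alpha_1/\alpha_0$. Your common-direction formulation $(\alpha_0,\alpha_1)=t(u_0,u_1)$ also covers the zero-entry cases, which the paper's division by $\alpha_0^{(i)}$ silently leaves out.
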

We can repeat the process in {\em{(i)-(iii)}} to merge the symbols in $\mathcal{Y}$ until we obtain a channel $W^\prime_{Y|X}$ such that there is a single output symbol $a$ such that $W^{\prime}_{Y|X}(a|0)\leq W^{\prime}_{Y|X}(a|1)$. Relabeling $0$ and $1$, we can continue this process to obtain a binary channel $W^{\prime\prime}_{Y|X}$.

\subsection*{Proofs of the lemmas used in Proof of Theorem~\ref{thm:impossibility}}
\begin{lemma}\label{lemma1}
    Let $0 < p < 1$, $0 < q < 1$, and $p \neq q$. Define the expression $h_q(p)$ as
    \begin{equation}
        h_q(p) = \frac{-\frac{p+q}{2\sqrt{pq}}+\frac{2-p-q}{2\sqrt{(1-p)(1-q)}}}{1-\sqrt{pq}-\sqrt{(1-p)(1-q)}}
    \end{equation}
    The function $h_q(p)$ can be simplified to the following form:
    \begin{align}
        h_q(p) &= -\frac{1}{2\sqrt{pq(1-p)(1-q)}} \nonumber \\
        &\quad \times \left(\sqrt{(1-p)(1-q)}-\sqrt{pq}\right) \nonumber \\
        &\quad \times \left(1+\sqrt{(1-p)(1-q)}+\sqrt{pq}\right)
    \end{align}
\end{lemma}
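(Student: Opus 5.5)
We will verify the identity by direct algebra, writing $s=\sqrt{pq}$ and $t=\sqrt{(1-p)(1-q)}$ so that both sides become rational expressions in $s$, $t$, $p+q$, and the statement reduces to a polynomial identity.

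The first step is to put the numerator of $h_q(p)$ over the common denominator $2st$. This gives
\[
-\frac{p+q}{2s}+\frac{2-p-q}{2t}=\frac{-(p+q)\,t+(2-p-q)\,s}{2st}.
\]
Since $p\neq q$ with $p,q\in(0,1)$, we have $s+t<1$ by the strict Cauchy--Schwarz inequality, so the original denominator $1-s-t$ is nonzero. The claim is therefore equivalent to
\[
-(p+q)\,t+(2-p-q)\,s=-(t-s)(1+t+s)(1-s-t).
\]

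The second step is to expand the right-hand side. Because $(1+t+s)(1-s-t)=1-(s+t)^2$, it equals
\[
-(t-s)\bigl(1-s^2-t^2-2st\bigr).
\]
The key relation is
\[
s^2+t^2=pq+(1-p)(1-q)=1-(p+q)+2pq,
\]
so $1-s^2-t^2=(p+q)-2pq$. The right-hand side thus becomes
\[
-(t-s)\bigl((p+q)-2s^2-2st\bigr)=-(t-s)(p+q)+2s(t-s)(s+t)=-(t-s)(p+q)+2s(t^2-s^2).
\]

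The third step uses $t^2-s^2=1-(p+q)$. Substituting gives
\[
-(t-s)(p+q)+2s-2s(p+q)=-(p+q)\,t+s(p+q)+2s-2s(p+q)=-(p+q)\,t+(2-p-q)\,s,
\]
which matches the left-hand side exactly.

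The only delicate point is bookkeeping. We must keep the two relations $s^2+t^2=1-(p+q)+2pq$ and $t^2-s^2=1-(p+q)$ straight, and we must use $p\neq q$ to justify dividing by $1-s-t$. Apart from that, the argument is routine expansion.
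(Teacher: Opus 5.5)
Your proof is correct and is essentially the paper's argument: with $s=A$ and $t=B$, you verify the same factorization $s(2-p-q)-t(p+q)=(s-t)(1-s-t)(1+s+t)$ by expanding the right-hand side instead of factoring the left-hand side, and then you cancel the same factor $1-s-t$. Your strict Cauchy--Schwarz argument that $s+t<1$ when $p\neq q$ also fills in a step the paper only asserts.
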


\begin{proof}
    Define $A=\sqrt{pq}$ and $B=\sqrt{(1-p)(1-q)}$. 
    First, we combine the terms in the numerator of $h_q(p)$.
    \begin{equation*}
        -\frac{p+q}{2A}+\frac{2-p-q}{2B}=\frac{A(2-p-q)-B(p+q)}{2AB}.
    \end{equation*}
    We have
    \begin{align*}
        A^2 &= pq, \\
        B^2 &= (1-p)(1-q) = 1-p-q+pq,
    \end{align*}
    which gives
    \begin{align*}
        1-p-q &= B^2-A^2, \\
        2-p-q &= 1+B^2-A^2, \\
        p+q &= 1-B^2+A^2.
    \end{align*}
    Substituting these into the numerator yields
    \begin{align*}
        A(2-p-q) &- B(p+q) \\
        &= A(1+B^2-A^2)-B(1-B^2+A^2) \\
        &= A+AB^2-A^3-B+B^3-A^2B \\
        &= (A-B)+(AB^2-A^2B)+(B^3-A^3).
    \end{align*}
    We factor the terms individually.
    \begin{align*}
        AB^2-A^2B &= AB(B-A) = -(A-B)AB, \\
        B^3-A^3 &= (B-A)(B^2+AB+A^2) \\
        &= -(A-B)(A^2+AB+B^2).
    \end{align*}
    Substituting these back into the numerator, we get
    \begin{align*}
         &A(2-p-q) - B(p+q)\\
         &= (A-B)\left[1-AB-(A^2+AB+B^2)\right] \\
        &= (A-B)\left(1-A^2-B^2-2AB\right) \\
        &= (A-B)\left(1-(A+B)^2\right) \\
        &= (A-B)(1-A-B)(1+A+B).
    \end{align*}
    Substituting this back into the expression for $h_q(p)$, we have
   \begin{align}
        h_q(p) &= \frac{ \frac{(A-B)(1-A-B)(1+A+B)}{2AB} }{1-A-B}\nonumber\\
         &= \frac{(A-B)(1+A+B)}{2AB}\label{eqn:pnotq} \\
        &= -\frac{1}{2}\frac{(B-A)(1+A+B)}{AB}\nonumber,
    \end{align}
    where \eqref{eqn:pnotq} follows because $p\neq q$ implies that $A+B\neq 1$. 
    Finally, substituting the expressions for $A$ and $B$ back into this yields
    \begin{align*}
        h_q(p) &= -\frac{1}{2\sqrt{pq(1-p)(1-q)}} \\
        &\quad \times \left(\sqrt{(1-p)(1-q)}-\sqrt{pq}\right) \\
        &\quad \times \left(1+\sqrt{(1-p)(1-q)}+\sqrt{pq}\right).
    \end{align*}
\end{proof}

\begin{lemma}\label{lemma2}
The function $f(0,c)$ in \eqref{eqn:f(0,c)} is non-increasing on the interval $(0, 1)$.
\end{lemma}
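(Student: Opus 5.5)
The plan is to remove the square roots involving $c$ by a change of variables, and then show monotonicity with a one-line derivative estimate. Since $k=\frac{1}{1-\theta}>1$ is fixed, the factor $(\sqrt{k}-1)^2$ in the denominator of \eqref{eqn:f(0,c)} is a positive constant. So it suffices to show that
\[
g(c) := \frac{\sqrt{k}-c+\sqrt{(1-c)(k-c)}}{\sqrt{k}+\sqrt{k-c}}
\]
is non-increasing on $(0,1)$. Both the numerator and the denominator decrease in $c$, so differentiating directly in $c$ gives no clear sign. This is where I expect the obstacle to lie, and I would handle it by reparametrizing.

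Set $s=\sqrt{k-c}$ and $t=\sqrt{1-c}$. Then $s^2-t^2=k-1$, so $t=\sqrt{s^2-(k-1)}$. As $c$ increases over $(0,1)$, $s$ decreases strictly over $(\sqrt{k-1},\sqrt{k})$, and $t\in(0,1)$ stays strictly positive. Substituting $c=k-s^2$ gives
\[
g = G(s) := \frac{s^2+st-k+\sqrt{k}}{\sqrt{k}+s}.
\]
Since $c\mapsto s$ is decreasing, it is enough to show that $G$ is non-decreasing in $s$ on $(\sqrt{k-1},\sqrt{k})$.

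Using $dt/ds=s/t$, the derivative of the numerator of $G$ is $2s+t+s^2/t$. The sign of $G'(s)$ is therefore the sign of
\[
\Big(2s+t+\tfrac{s^2}{t}\Big)(\sqrt{k}+s) - \big(s^2+st-k+\sqrt{k}\big).
\]
All terms in the first product are nonnegative. Dropping the contributions from $\sqrt{k}$ and from $s^2/t$, the first product is at least $(2s+t)s=2s^2+st$. Hence the expression above is at least
\[
2s^2+st-s^2-st+k-\sqrt{k} = s^2+k-\sqrt{k},
\]
which is strictly positive because $k>1$ gives $k>\sqrt{k}$. Therefore $G$ is increasing in $s$, $g$ is non-increasing in $c$, and so $f(0,c)$ is non-increasing on $(0,1)$.

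It remains to check the endpoints. The function $G$ is continuous on the closed $s$-interval. At $c\to 0$ (that is, $s\to\sqrt{k}$, $t\to 1$), $f(0,c)$ tends to $\frac{1}{(\sqrt{k}-1)^2}$, which is the limit the paper uses. This confirms the bound $f(0,c)\le \lim_{c\downarrow 0} f(0,c)$.
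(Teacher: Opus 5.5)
Your proof is correct, and it is shorter than the paper's, although the two share ingredients. Your $s=\sqrt{k-c}$ is the paper's $r$, and both arguments use $t=\sqrt{1-c}$ together with the identity $s^2-t^2=k-1$.

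The paper differentiates $f(0,c)$ directly in $c$. In its notation, $q=\sqrt{k}$, $r=\sqrt{k-c}$, and $s=\sqrt{(1-c)(k-c)}$, which is your $st$. This produces the numerator $-r(q+r)(r^2+2s-c+1)+s(q+s-c)$. The paper then has to regroup it exactly, via $r^2-t^2=q^2-1$, into $-(q-1)(rq+r+tq)-q(r+t)^2-rt(r+t)$, so that every term is visibly negative.

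You instead change the independent variable to $\sqrt{k-c}$ before differentiating. That turns the function into $(s^2+st-k+\sqrt{k})/(\sqrt{k}+s)$, with a linear denominator. A crude estimate then signs the quotient-rule numerator in one line: drop the nonnegative terms and use $\sqrt{k}-k<0$ to get the margin $s^2+k-\sqrt{k}>0$.

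Your version is less error-prone, since there is no exact factorization to get right, and it gives strict monotonicity at no extra cost. The paper's version produces a closed form for $f'(0,c)$, which the lemma does not actually need. The paper's proof also ends with ``non-decreasing,'' a typo for non-increasing; your conclusion states the direction correctly.
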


\begin{proof}

     The derivative of $f(0,c)$ w.r.t. $c$ evaluates to
    \[
    f'(0,c)=\frac{-r(q+r)(r^2+2s-c+1)+s(q+s-c)}{2rs(q-1)^2(q+r)^2},
    \]
    where $$q=\sqrt{k},~r=\sqrt{k-c},~ s=\sqrt{(1-c)(k-c)}.$$
    The denominator is positive, and the sign of $f'(0,c)$ is determined entirely by its numerator,
\[
\mathrm{Num} = -r(q+r)(r^2+2s-c+1) + s(q+s-c).
\]
Define $t = \sqrt{1-c}$. We have
\begin{enumerate}
    \item $c = 1 - t^2$,
    \item $s = \sqrt{(1-c)(k-c)} = tr$,
    \item $r^2 - t^2 = (k-c) - (1-c) = k-1 = q^2 - 1$.
\end{enumerate}
Substituting $c = 1-t^2$ and $s = tr$ into the components of the terms in the numerator yields
\begin{align*} 
r^2+2s-c+1 &= r^2+2tr-(1-t^2)+1 \\
&= r^2+2tr+t^2 = (r+t)^2, 
\end{align*}
\begin{align*} 
q+s-c &= q+tr-(1-t^2) \\
&= q-1+t^2+tr = (q-1)+t(r+t). 
\end{align*}
Applying these substitutions back into $\mathrm{Num}$ gives
\[
\mathrm{Num} = -r(q+r)(r+t)^2 + tr \big[ (q-1) + t(r+t) \big],
\]
which implies
\begin{align*}
    &\frac{\mathrm{Num}}{r} \\
    &= -(q+r)(r+t)^2 + t(q-1) + t^2(r+t) \\
&= t(q-1) - (r+t) \big[ (q+r)(r+t) - t^2 \big] \\
&= t(q-1) - (r+t) \big[ q(r+t) + rt + (r^2 - t^2) \big] \\
&= t(q-1) - (r+t) \big[ q(r+t) + rt + (q^2 - 1) \big] \\
&= t(q-1) - (r+t)(q^2-1) - q(r+t)^2 - rt(r+t) \\
&= (q-1) \big[ t - (r+t)(q+1) \big] - q(r+t)^2 - rt(r+t) \\
&= (q-1) [t - rq - r - tq - t] - q(r+t)^2 - rt(r+t) \\
&= -(q-1)(rq + r + tq) - q(r+t)^2 - rt(r+t).
\end{align*}

Noting that $q=\sqrt{k}>1$, we have that all the three terms above are negative, and hence $f'(0,c)\leq 0$. Thus, $f(0,c)$ is non-decreasing in $(0,1)$.
\end{proof}

\begin{proof}[Proof of Lemma~\ref{claim}]\label{proofoflemma1}
    For notational simplicity, we denote $\alpha^{(1)}_i$ by $a_i$ and $\alpha^{(2)}_i$ by $b_i$, for $i\in\{0,1\}$.
    First, we show that $f'(\gamma)\ge 0$ where
    \[ 
        f(\gamma) = \frac{(1-\gamma)a_0 + \gamma}{(1-\gamma)a_1 + \gamma} = \frac{a_0 + \gamma(1-a_0)}{a_1 + \gamma(1-a_1)}.
    \]

    To determine its monotonicity, we compute  $f'(\gamma)$.
    \begin{align*}
        f'(\gamma) &= \frac{1}{[a_1 + \gamma(1-a_1)]^2} \Big( (1-a_0)[a_1 + \gamma(1-a_1)] \\
        &\quad - (1-a_1)[a_0 + \gamma(1-a_0)] \Big).
    \end{align*}

    Expanding the terms inside the numerator yields
    \begin{align*}
        &(a_1 + \gamma - a_1 \gamma - a_0 a_1 - a_0 \gamma + a_0 a_1 \gamma) \\
        &- (a_0 + \gamma - a_0 \gamma - a_1 a_0 - a_1 \gamma + a_1 a_0 \gamma).
    \end{align*}

    Notice that all terms involving $\gamma$ cancel out, simplifying the numerator to $a_1 - a_0$. Thus, the derivative is
    \[ 
        f'(\gamma) = \frac{a_1 - a_0}{[a_1 + \gamma(1-a_1)]^2}.
    \]

    From the premise $\frac{a_0}{a_1} < \frac{b_0}{b_1} \le 1$, we have $\frac{a_0}{a_1} < 1$, which implies $a_0 < a_1$ (since the variables are non-negative). Therefore, the numerator $a_1 - a_0 > 0$. Since the denominator is strictly positive, $f'(\gamma) > 0$. This confirms that the ratio strictly increases as $\gamma$ increases.

    Next, we show that the $\gamma$ satisfying $f(\gamma) = \frac{b_0}{b_1}$ falls in the interval $[0, 1]$. We set up the equation
    \[ 
        \frac{a_0 + \gamma(1-a_0)}{a_1 + \gamma(1-a_1)} = \frac{b_0}{b_1}.
    \]

    This gives
     \[ 
        \gamma = \frac{N}{M + N},
    \]
    where $N = a_1 b_0 - a_0 b_1$ and $M = b_1 - b_0$.
    We analyze $N$ and $M$ based on the given conditions.
    \begin{enumerate}
        \item Since $\frac{a_0}{a_1} < \frac{b_0}{b_1}$, cross-multiplying yields $a_0 b_1 < a_1 b_0$, which means $N = a_1 b_0 - a_0 b_1 > 0$.
        \item Since $\frac{b_0}{b_1} \le 1$, we have $b_0 \le b_1$, which means $M = b_1 - b_0 \ge 0$.
    \end{enumerate}

    Because $N > 0$ and $M \ge 0$, the denominator $M + N$ is strictly positive and $M + N \ge N$. Consequently, the fraction $\frac{N}{M+N}$ must lie in the interval $(0, 1]$. This satisfies the required condition $0 \le \gamma \le 1$.
\end{proof}

\begin{lemma}\label{lemma:Hinc}
    $H(\gamma)$ in \eqref{eq:H_gamma} is strictly increasing for $\gamma\in[0,\gamma^*]$. 
\end{lemma}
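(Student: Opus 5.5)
Throughout, I read the lemma as ``$H(\gamma)$ is non-decreasing (strictly increasing when the first symbol contributes) on $[0,\gamma^*]$''. This is the direction needed so that passing from $W$ to $W^\gamma$ can only increase the Hellinger ratio, and hence the supremum over channels is attained by binary channels.

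\textbf{Reparametrization and the one-symbol derivative.} I will set $t=\gamma/(1-\gamma)$. This map is increasing in $\gamma$, so it suffices to prove monotonicity in $t\in[0,t^*]$ with $t^*=\gamma^*/(1-\gamma^*)$. By the second line of \eqref{eq:H_gamma}, only the first-symbol terms depend on $t$. Write
\[
H(t)=\frac{g_N(t)+S_N}{g_D(t)+S_D},\qquad g_N(t)=\bigl(\sqrt{a^{(1)}_2+t}-\sqrt{a^{(1)}_1+t}\bigr)^2,\quad g_D(t)=\bigl(\sqrt{a^{(1)}_0+t}-\sqrt{a^{(1)}_1+t}\bigr)^2,
\]
where $S_N,S_D$ are the $t$-independent sums over $j\ge 2$. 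A direct computation gives
\[
\frac{d}{dt}\bigl(\sqrt{u+t}-\sqrt{v+t}\bigr)^2=-\frac{\bigl(\sqrt{u+t}-\sqrt{v+t}\bigr)^2}{\sqrt{(u+t)(v+t)}} .
\]
Hence $g_N'=-g_N/\sqrt{(a_2^{(1)}+t)(a_1^{(1)}+t)}$ and $g_D'=-g_D/\sqrt{(a_0^{(1)}+t)(a_1^{(1)}+t)}$. The sign of $H'$ is that of $g_N'(g_D+S_D)-g_D'(g_N+S_N)$. So $H'\ge0$ is equivalent to
\[
\frac{g_N'}{g_D'}\;\le\;\frac{g_N+S_N}{g_D+S_D}=H(t),
\]
that is, the ``marginal ratio'' contributed by shifting mass into symbol $1$ must not exceed the current ratio.

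\textbf{Per-symbol structure.} For any symbol $j$, I will write $\alpha^{(j)}_1=w_j$ and $\rho_j=\alpha^{(j)}_0/\alpha^{(j)}_1$. Since the inputs are $\Ber(0)$, $\Ber(\lambda)$ and $\Ber(1)$, the outputs are
\[
a_0^{(j)}=w_j\rho_j,\qquad a_1^{(j)}=w_j\bigl((1-\lambda)\rho_j+\lambda\bigr),\qquad a_2^{(j)}=w_j .
\]
Each symbol therefore contributes $w_j\phi_N(\rho_j)$ to the numerator and $w_j\phi_D(\rho_j)$ to the denominator. Consequently $H(t)$ is a weighted mediant of the single-symbol ratios $r(\rho)=\phi_N(\rho)/\phi_D(\rho)$, evaluated at $\rho_j$ for $j\ge2$ and at the current symbol-$1$ ratio
\[
\rho_1(t)=\frac{\alpha_0^{(1)}+t'}{\alpha_1^{(1)}+t'} .
\]
The function $r$ is exactly the binary-channel ratio $f$ from the proof of Proposition~\ref{prop:propFiniteRatio}, written in likelihood-ratio coordinates. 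The monotonicity argument there, namely $h_q(p)$ increasing in $p$ via the $\sin^2$ substitution, should show that $r(\rho)$ is monotone in $\rho$, with smaller $\rho\le1$ giving larger $r$.

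\textbf{Main obstacle.} The key step is to show that the marginal ratio $g_N'/g_D'$ is dominated by the mediant $H(t)$. By the derivative formula,
\[
\frac{g_N'}{g_D'}=\frac{g_N}{g_D}\sqrt{\frac{a_0^{(1)}+t}{a_2^{(1)}+t}}=r(\rho_1(t))\sqrt{\rho_1(t)} .
\]
I would try to show that this quantity is at most $\min_{j\ge 2} r(\rho_j)$, and also at most $r(\rho_1(t))$ itself. The latter holds because $\sqrt{\rho_1(t)}\le 1$. The former should use the constraint $t\le t^*$: by Lemma~\ref{claim}, $\rho_1(t)\le\rho_2\le\rho_j$ on this range. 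If the monotonicity of $r$ turns out to go the wrong way for this comparison, the fallback is a direct check of the single inequality
\[
r(\rho_1)\sqrt{\rho_1}\le r(\rho)\quad\text{for all }\rho\ge\rho_1 ,
\]
using the trigonometric parametrization from the appendix. A bound below every term of a mediant is below the mediant, which gives $H'\ge 0$, with strict inequality whenever $g_D>0$. This comparison, tying the derivative factor $\sqrt{\rho_1}$ to the ordering $\rho_1\le\rho_2$, is where I expect the real work to be.
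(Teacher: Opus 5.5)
\textbf{There is a genuine gap: the key monotonicity step is missing, and you state its direction backwards.} Your reduction is the right one and is the paper's own (its Steps 1--2). With $t=\gamma/(1-\gamma)$, $H'\ge 0$ is equivalent to $r(\rho_1(t))\sqrt{\rho_1(t)}\le H(t)$, and $H(t)$ is a mediant of single-symbol ratios. The lemma, however, rests on how the single-symbol ratio $r(\rho)$ depends on $\rho$, and you neither prove this nor state it correctly.

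You assert that smaller $\rho\le 1$ gives larger $r$. In fact $r$ is increasing in $\rho$. Already $r(0)=(1-\sqrt{\lambda})^2/\lambda$ and $\lim_{\rho\to 1}r(\rho)=\bigl((1-\lambda)/\lambda\bigr)^2$, which are $1$ and $9$ for $\lambda=1/4$. Under your stated direction, $r(\rho_1)\sqrt{\rho_1}\le r(\rho_j)$ would be a real trade-off between the factor $\sqrt{\rho_1}$ and a decrease of $r$. Your fallback (a ``direct check using the trigonometric parametrization'') gives no argument for it.

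The pointer to Proposition~\ref{prop:propFiniteRatio} does not supply the fact either. First, $r$ is not the binary-channel ratio $f$: for a binary channel, $f$ is itself a mediant of two terms $r(\rho_1),r(\rho_2)$. Second, the monotonicity of $h_q(p)$ there concerns the logarithmic derivative of a binary Hellinger affinity deficit under a common shift $a$. It says nothing about a single symbol's contribution ratio as a function of that symbol's likelihood ratio.

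\textbf{What fills the gap.} The missing ingredient is the paper's Step~3. Set $t=1/\rho$, $y=\sqrt t$ and $w=\sqrt{(1-\lambda)+\lambda y^2}$. Then $1/r=\bigl((1-w)/(y-w)\bigr)^2$. Differentiating $\lvert(1-w)/(y-w)\rvert$ separately on $y<1$ and on $y>1$ shows it is increasing iff $w>(1-\lambda)+\lambda y$. This holds because $w^2-\bigl((1-\lambda)+\lambda y\bigr)^2=\lambda(1-\lambda)(y-1)^2>0$. Hence $1/r$ increases in $t$, so $r$ increases in $\rho$ (using the continuous extension at $\rho=1$).

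With this in hand, your main obstacle disappears. By Lemma~\ref{claim} and the likelihood-ratio ordering, $\rho_1(t)\le\rho_2\le\rho_j$ and $\rho_1(t)\le 1$ on $[0,t^*]$. So $r(\rho_1)\sqrt{\rho_1}\le r(\rho_1)\le r(\rho_j)$ for every $j\ge 2$, and your mediant bound gives $H'\ge 0$.

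This is exactly the paper's sufficient condition $F/G\ge C_0/C_2$, i.e.\ $r(\rho_1)\le C_2/C_0$, combined with $\sqrt{(a_0^{(1)}+x)/(a_2^{(1)}+x)}<1$. The paper additionally handles degenerate indices and the endpoint $x=0$ by continuity.
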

\begin{proof}

    \emph{\underline{Step 1:}}
Let $x = \frac{\gamma}{1-\gamma}$. Because $x$ is a strictly increasing function for $\gamma \in [0, 1)$, $H(\gamma)$ is strictly increasing if and only if $H(x)$ is strictly increasing. From Lemma~\ref{claim},
\[
 \frac{a_0^{(1)} + x}{a_2^{(1)} + x} \le \frac{a_0^{(2)}}{a_2^{(2)}}.
\]

Because $a_0^{(1)} < a_2^{(1)}$, the function $x \mapsto \frac{a_0^{(1)}+x}{a_2^{(1)}+x}$ is strictly increasing in $x$. If $\frac{a_0^{(1)}}{a_2^{(1)}} \ge \frac{a_0^{(2)}}{a_2^{(2)}}$, the constraint is already tight or violated at $x=0$. In this case, the admissible domain is only the singleton $\{0\}$, where the theorem holds vacuously.

We therefore proceed assuming $x^* > 0$, which strictly requires $\frac{a_0^{(1)}}{a_2^{(1)}} < \frac{a_0^{(2)}}{a_2^{(2)}}$. Since probabilities are non-negative, this strictly implies $a_0^{(2)} > 0$.

We rewrite $H(x) = \frac{G(x) + C_2}{F(x) + C_0}$, where
\begin{itemize}
    \item $F(x) = \left(\sqrt{a_0^{(1)}+x} - \sqrt{a_1^{(1)}+x}\right)^2$,
    \item $G(x) = \left(\sqrt{a_2^{(1)}+x} - \sqrt{a_1^{(1)}+x}\right)^2$,
    \item $C_0 = \sum_{j=2}^n \left(\sqrt{a_0^{(j)}} - \sqrt{a_1^{(j)}}\right)^2 = \sum_{j=2}^n c_{0,j}$,
    \item $C_2 = \sum_{j=2}^n \left(\sqrt{a_2^{(j)}} - \sqrt{a_1^{(j)}}\right)^2 = \sum_{j=2}^n c_{2,j}$.
\end{itemize}

\emph{\underline{Step 2:}}
We differentiate $H(x)$ to get
\[
H'(x) = \frac{G'(x)(F(x)+C_0) - F'(x)(G(x)+C_2)}{(F(x)+C_0)^2}.
\]
Since $F(x)$ and $G(x)$ strictly decrease as $x$ increases, $F'(x) = -|F'(x)|$ and $G'(x) = -|G'(x)|$. For $H'(x) \ge 0$ to hold, we require
\[
-|G'(x)|(F(x)+C_0) + |F'(x)|(G(x)+C_2) \ge 0,
\]
which directly simplifies to the condition
\[
|F'(x)|(G(x)+C_2) \ge |G'(x)|(F(x)+C_0).
\]

Also,
\begin{align*}
    |F'(x)| &= F(x) \cdot \frac{1}{\sqrt{(a_0^{(1)}+x)(a_1^{(1)}+x)}}, \\
    |G'(x)| &= G(x) \cdot \frac{1}{\sqrt{(a_2^{(1)}+x)(a_1^{(1)}+x)}}.
\end{align*}

Because $a_0^{(1)} < a_2^{(1)}$, it strictly follows that $\frac{1}{\sqrt{a_0^{(1)}+x}} > \frac{1}{\sqrt{a_2^{(1)}+x}}$. So, the following inequality is sufficient to imply $H'(x) > 0$.
\[
\frac{F(x)}{G(x)} \ge \frac{C_0}{C_2}.
\]

\emph{\underline{Step 3:}}
For $t > 0$, define the function
\[
K(t) = \frac{\left(1 - \sqrt{\theta + (1-\theta)t}\right)^2}{\left(\sqrt{t} - \sqrt{\theta + (1-\theta)t}\right)^2}.
\]

Let $y = \sqrt{t}$, $q = 1-\theta$, and $w = \sqrt{\theta + qy^2}$. For $y \neq 1$, define
\[
M(y) = \left|\frac{w-1}{y-w}\right|.
\]
Since $K(t) = M(y)^2$, it is enough to prove that $M(y)$ is increasing in $y$. For $y > 1$, we have $M(y) = \frac{w-1}{y-w}$. Differentiation gives
\[
\frac{M'(y)}{M(y)} = \frac{w'}{w-1} - \frac{1-w'}{y-w}.
\]
Thus $M'(y) > 0$ is equivalent to $w'(y-w) > (1-w')(w-1)$. Using $w' = \frac{qy}{w}$, this reduces to $w > \lambda + qy$. But
\[
w^2 - (\theta+qy)^2 = \lambda q(y-1)^2 > 0.
\]
Hence $M'(y) > 0$ for $y > 1$. Differentiating $M(y) = \frac{1-w}{w-y}$ for $0 < y < 1$ similarly gives the exact same equivalent condition $w > \theta + qy$, proving $K(t)$ is strictly increasing for all $t > 0, t \neq 1$.

\emph{\underline{Step 4:}}

 Since we are in the case $a_0^{(2)}>0$, the monotone likelihood ratio condition implies that, for every $j\ge 2$,
\[
a_0^{(2)}a_2^{(j)} \le a_0^{(j)}a_2^{(2)}.
\]
Since $a_2^{(2)}>0$, if $a_0^{(j)}=0$, then necessarily $a_2^{(j)}=0$. Such an index contributes zero to both $C_0$ and $C_2$, and may be omitted.

Thus, for every active index $j\ge 2$, we have $a_0^{(j)}>0$ and may define $t_j=\frac{a_2^{(j)}}{a_0^{(j)}}$.

If $t_j = 1$, then $a_0^{(j)} = a_2^{(j)}$, meaning both $c_{0,j}$ and $c_{2,j}$ vanish. We resolve this by defining $K(1)$ through its continuous extension,
\[
K(1) := \lim_{t\to 1}K(t) = \left(\frac{1-\lambda}{\lambda}\right)^2.
\]
With this convention, we have
\[
c_{0,j} = c_{2,j}K(t_j),
\]
which remains valid for all active $j \ge 2$, including when $t_j = 1$.

\emph{\underline{Step 5:}}
For $x \in (0, x^*]$, the term $a_0^{(1)}+x$ is strictly positive. Now we identify
\[
\frac{F(x)}{G(x)} = K(t_1(x)), \qquad t_1(x) = \frac{a_2^{(1)}+x}{a_0^{(1)}+x}.
\]
By the monotone likelihood ratio condition, the ratios $t_j=\frac{a_2^{(j)}}{a_0^{(j)}}$ are non-increasing over the active indices $j\ge 2$. In particular, $t_2\ge t_j$ for every active $j\ge 2$.

Recall that, for $x \in (0, x^*]$,
\[
\frac{a_0^{(1)}+x}{a_2^{(1)}+x} \le \frac{a_0^{(2)}}{a_2^{(2)}}.
\]
Since all quantities in this inequality are strictly positive for $x > 0$ in the nondegenerate case, this is equivalent to
\[
t_1(x) = \frac{a_2^{(1)}+x}{a_0^{(1)}+x} \ge \frac{a_2^{(2)}}{a_0^{(2)}} = t_2.
\]
Therefore, $t_1(x)\ge t_j$ for every active $j\ge 2$. Since $K$ is strictly increasing,
\[
K(t_1(x))\ge K(t_j).
\]
Hence
\begin{align*}
    C_0 &= \sum_{j=2}^n c_{0,j} = \sum_{j=2}^n c_{2,j}K(t_j) \\
    &\le K(t_1(x))\sum_{j=2}^n c_{2,j} = K(t_1(x))C_2.
\end{align*}
Thus,
\[
\frac{C_0}{C_2} \le K(t_1(x)) = \frac{F(x)}{G(x)}.
\]

From Step 2, this implies that $H'(x) > 0$ for all $x \in (0, x^*]$.

Because $a_1$ is a strict convex combination of $a_0$ and $a_2$, the inequality $a_0^{(1)} < a_2^{(1)}$ guarantees that $a_0^{(1)} < a_1^{(1)}$. Consequently, $F(0) = \left(\sqrt{a_0^{(1)}} - \sqrt{a_1^{(1)}}\right)^2 > 0$. The denominator of $H(x)$, which is $F(x) + C_0$, is therefore strictly positive and non-zero. This makes $H(x)$ continuous on the closed interval $[0, x^*]$. Since $H(x)$ is continuous on $[0, x^*]$ and strictly increasing on $(0, x^*]$, it is strictly increasing on $[0, x^*]$. \hfill 
\end{proof}

    \begin{proof}[Proof of Lemma~\ref{lemma:merge}]\label{proofoflemma2}
    The condition $\frac{\alpha_0^{(i)}}{\alpha_1^{(i)}} = \frac{\alpha_0^{(j)}}{\alpha_1^{(j)}}$ implies their reciprocals are also equal. Let $r$ be this common ratio.
    \[
        r = \frac{\alpha_1^{(i)}}{\alpha_0^{(i)}} = \frac{\alpha_1^{(j)}}{\alpha_0^{(j)}}.
    \]
    This allows us to express the probabilities as $\alpha_1^{(i)} = r\alpha_0^{(i)}$ and $\alpha_1^{(j)} = r\alpha_0^{(j)}$.

    Then
    \begin{align*}
        P_Y^{(1)}(i) &= (1 - p)W_{Y|X}(i|0) + p W_{Y|X}(i|1) \\
        &= (1 - p)\alpha_0^{(i)} + p \alpha_1^{(i)} \\
        &= (1 - p)\alpha_0^{(i)} + p (r\alpha_0^{(i)}) \\
        &= \alpha_0^{(i)}(1 - p + p r).
    \end{align*}

    By identical reasoning for symbol $j$, we have
    \[
        P_Y^{(1)}(j) = \alpha_0^{(j)}(1 - p + p r).
    \]

    For $X_2 \sim \mathrm{Ber}(q)$, substituting $q$ for $p$ yields
    \begin{align*}
        P_Y^{(2)}(i) &= \alpha_0^{(i)}(1 - q + q r), \\
        P_Y^{(2)}(j) &= \alpha_0^{(j)}(1 - q + q r).
    \end{align*}

    When symbols $i$ and $j$ are merged, the probability of observing the supersymbol $ij$ for input $X_1$ is
    \begin{align*}
        P_{Y^\prime}^{(1)}(ij) &= P_Y^{(1)}(i) + P_Y^{(1)}(j) \\
        &= (\alpha_0^{(i)} + \alpha_0^{(j)})(1 - p + p r).
    \end{align*}
    Similarly for $X_2$,
    \[
        P_{Y^\prime}^{(2)}(ij) = (\alpha_0^{(i)} + \alpha_0^{(j)})(1 - q + q r).
    \]

    The squared Hellinger distance is $H^2(P, Q) = \sum_{y} \left(\sqrt{P(y)} - \sqrt{Q(y)}\right)^2$. Because the merge only affects $i$ and $j$, the distance change is exactly the difference between the separated symbols and the merged symbol. Let $\Delta$ denote this difference:
    \[
        \Delta = H^2\left(P_Y^{(1)}, P_Y^{(2)}\right) - H^2\left(P_{Y^\prime}^{(1)}, P_{Y^\prime}^{(2)}\right).
    \]
    
    The contribution from the distinct symbols $i$ and $j$ is
    \begin{align*}
        T_{\text{orig}} &= \left(\sqrt{P_Y^{(1)}(i)} - \sqrt{P_Y^{(2)}(i)}\right)^2 \\
        &\quad + \left(\sqrt{P_Y^{(1)}(j)} - \sqrt{P_Y^{(2)}(j)}\right)^2 \\
        &= \left(\sqrt{\alpha_0^{(i)}(1 - p + p r)} - \sqrt{\alpha_0^{(i)}(1 - q + q r)}\right)^2 \\
        &\quad + \left(\sqrt{\alpha_0^{(j)}(1 - p + p r)} - \sqrt{\alpha_0^{(j)}(1 - q + q r)}\right)^2 \\
        &= (\alpha_0^{(i)} + \alpha_0^{(j)}) \left(\sqrt{1 - p + p r} - \sqrt{1 - q + q r}\right)^2.
    \end{align*}

    The contribution from the merged supersymbol $ij$ is
    \begin{align*}
        T_{\text{merged}} &= \left(\sqrt{P_{Y^\prime}^{(1)}(ij)} - \sqrt{P_{Y^\prime}^{(2)}(ij)}\right)^2 \\
        &= \Bigg(\sqrt{(\alpha_0^{(i)} + \alpha_0^{(j)})(1 - p + p r)} \\
        &\qquad - \sqrt{(\alpha_0^{(i)} + \alpha_0^{(j)})(1 - q + q r)}\Bigg)^2 \\
        &= (\alpha_0^{(i)} + \alpha_0^{(j)}) \left(\sqrt{1 - p + p r} - \sqrt{1 - q + q r}\right)^2.
    \end{align*}
Since $T_{\text{orig}} = T_{\text{merged}}$, we have $\Delta = 0$, proving the squared Hellinger distance is preserved.
\end{proof}

\section*{Proof of \Cref{corr:collinear}}
To prove the ``only if'' direction, suppose there exists $\lambda\in[0,1]$ such that $\lambda\mu_a+(1-\lambda)\mu_b=\mu_c$ for distinct $a,b,c\in\{0,1,2\}$. Further, suppose there exist $(\epsilon,\delta)$-SDHT scheme $\left(\{W_{Y_i|X_i}\}_{i=1}^n,\det\right)$ for $\cH_0$ vs. $\cH_1$ using $n$ samples and without using shared key. We will construct a $(\epsilon,\delta)$-SDHT scheme for $\{\Ber(p_0),\Ber(p_1)\}$ vs. $\{\Ber(p_2)\}$ for some $p_0,p_1,p_2\in[0,1]$ using $n$ samples and without using shared key. This contradicts Theorem~\ref{thm:impossibility} for sufficiently small $\epsilon>0$ and $\delta>0$.

Consider $R_{X|U}$ such that $R_{X|0} = \mu_a$, $R_{X|1} = \mu_b$. When $\cU=\{0,1\}$, $R_{X|U}\circ \Ber(0) = \mu_a$, $R_{X|U}\circ \Ber(1) = \mu_b$ and $R_{X|U}\circ \Ber(\lambda) = \mu_c$. Let $\phi$ be a map from $\{0,1,2\}$ to $\{0,1,\lambda\}$ that maps $a,b,c$ to $0,1,\lambda$, respectively. Then, $\{\{W_{Y_i|X_i}\circ R_{X|U}\}_{i=1}^n,\det\}$ is a $(\epsilon,\delta)$-SDHT scheme for $\{\Ber(\phi(0)),\Ber(\phi(1))\}$ vs. $\{\Ber(\phi(2))\}$ using $n$ samples and without using shared key. This contradicts Theorem~\ref{thm:impossibility} for sufficiently small $\epsilon>0$ and $\delta>0$.

To prove the ``if'' direction, we will show that \Cref{prop:lp} implies a SDHT scheme using $n$ samples that incurs $n$ bits of communication without using shared secret key when $\lambda,a,b,c$ satisfying the condition in the corollary do not exist, 

For this, we will construct $W_{Y|X}$ such that $W_{Y|X}\circ\mu_0=W_{Y|X}\circ\mu_1$, but $W_{Y|X}\circ\mu_2\neq W_{Y|X}\circ\mu_0$.
Interpret $\mu_0,\mu_1$ and $\mu_2$ as $|\cX|$-dimensional vectors; define $\bm{u}_1=\mu_0-\mu_1$ and $\bm{u}_2=\mu_0-\mu_2$.
By non-collinearity, $\bm{u}_1$ and $\bm{u}_2$ are linearly independent.
Let $\bm{v}$ be a unit vector such that $\langle \bm{v},\bm{u}_1\rangle = 0$ and $\langle \bm{v},\bm{u}_2\rangle > 0$.
When $\bm{1}$ is the all-ones vector,
there exists $\beta>0$ such that each coordinate of $\bm{v}'=\alpha\cdot\bm{v} + \frac{1}{2}\cdot \bm{1}$ belongs to $[0,1]$.
Define $W_{Y|X}$ such that $W_{Y|X}(x)=\mathrm{Bernoulli}(\bm{v}'_x)$ for $x\in\cX$.
Clearly, for any $x\in\cX$, $\Pr[W_{Y|X}(x)=1] = \bm{v}'_x$, where $\bm{v}'_x\in[0,1]$ is the $x$-th coordinate of $\bm{v}'$. But then, for $b\in\{1,2\}$,
\begin{align*}
    &W_{Y|X}\circ \mu_0(1) - W_{Y|X}\circ \mu_b(1) \\
    &\quad = \langle \bm{v}',\bm{u}_b\rangle = \alpha\cdot\langle \bm{v},\bm{u}_b\rangle + \frac{1}{2}\cdot\langle \bm{1},\bm{u}_b\rangle.
\end{align*}
Clearly, $\langle \bm{1},\bm{u}_b\rangle = 0$ for $b\in\{1,2\}$ since $\mu_0,\mu_1$ and $\mu_2$ are distributions. Moreover, $\langle \bm{v},\bm{u}_1\rangle = 0$ and $\langle \bm{v},\bm{u}_2\rangle > 0$. We conclude that the channel $W_{Y|X}$ maps $\mu_0$ and $\mu_1$ to the same distribution, but maps $\mu_2$ to a different distribution.

\section*{Proof of Theorem~\ref{thm:achievability}}
By \Cref{corr:collinear}, it suffices to construct perfectly private distributed hypothesis test for $\cH_0=\{\mu_0,\mu_1\}$ and $\cH_1=\{\mu_2\}$ such that $\mu_0,\mu_1$ and $\mu_2$ are collinear. This is shown in two steps. In \Cref{clm:bernoulli-achievability}, a perfectly private distributed hypothesis test is constructed for $\cH_0$ vs. $\cH_1$ using $1$ bit shared key when $\mu_0,\mu_1$ and $\mu_2$ are Bernoulli distributions such that the latter is distinct from the former two. Then we argue that, when $\mu_0,\mu_1$ and $\mu_2$ are collinear distributions such that $\mu_2$ is distinct from both $\mu_0$ and $\mu_1$, there exists a channel $W_{Y|X}$ that maps $\mu_0,\mu_1$ and $\mu_2$ to Bernoulli distributions where the latter is distinct from the former two. By composing the channel $W_{Y|X}$ with the perfectly private distributed hypothesis test for Bernoulli distributions, we obtain the required perfectly private distributed hypothesis test for $\cH_0$ vs. $\cH_1$ using $1$ bit shared key.

The latter step is straightforward: let $\cY=\{0,1\}$. Let $W_{Y|X}$ be the deterministic channel which maps each $x\in \cX$ to $1$ if $\mu_0(x) > \mu_1(x)$ and maps $x$ to $0$ otherwise. Since $\mu_0$ and $\mu_2$ are distinct, $W_{Y|X}$ maps $\mu_0$ and $\mu_2$ to different Bernoulli distributions. Moreover, since $\mu_0,\mu_1$ and $\mu_2$ are collinear, but $\mu_1$ and $\mu_2$ are distinct, $W_{Y|X}$ maps $\mu_1$ and $\mu_2$ to distinct Bernoulli distributions.

\begin{claim}\label{clm:bernoulli-achievability}
    Let $\cH_0=\{\Ber(p_0),\Ber(p_1)\}$ and $\cH_1=\{\Ber(p_2)\}$ where $p_2\notin\{p_0,p_1\}$. Then, there exists a perfectly private distributed hypothesis test using $1$ bit shared key for $\cH_0$ vs. $\cH_1$ using $n$ samples that is $(1-e^{-\Omega(n)})$-correct and incurs $n$ bits of communication.
\end{claim}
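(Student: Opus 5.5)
Two steps: a per-sample channel reduces the instance to the symmetric one $\{\Ber(p),\Ber(1-p)\}$ vs. $\{\Ber(q)\}$ with $q\notin\{p,1-p\}$, and then the key-flip scheme from the introduction is applied to the reduced samples.

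\emph{Reduction.} I would use a binary channel $W_{U|X}$ with $W_{U|X}(1|0)=s$ and $W_{U|X}(1|1)=t$, so that $\Ber(r)$ is mapped to $\Ber(s+(t-s)r)$. This map is affine in $r$. If $p_0=p_1$, the class $\cH_0$ is a singleton and the identity channel with no key suffices. Otherwise, pick $t-s=\pm\kappa$ for a small $\kappa>0$ and set $s+(t-s)\frac{p_0+p_1}{2}=\frac12$. Then $\Ber(p_0)$ and $\Ber(p_1)$ go to $\Ber(p)$ and $\Ber(1-p)$ with $p=\frac12+(t-s)\frac{p_0-p_1}{2}$. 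For $\kappa$ small enough, $s,t\in[0,1]$. Because the map is affine and injective, $p_2\notin\{p_0,p_1\}$ gives an image $q$ with $q\notin\{p,1-p\}$.

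\emph{Scheme.} Each client $i$ first draws $U_i\sim W_{U|X}(\cdot|X_i)$ independently. Using the shared bit $K$, it then sends $Y_i=U_i\oplus K$.

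\emph{Privacy.} Under $\Ber(p_0)$, conditioned on $K=0$ the $Y_i$ are i.i.d. $\Ber(p)$, and conditioned on $K=1$ they are i.i.d. $\Ber(1-p)$. So $P_{Y^n}=\tfrac12\Ber(p)^{\otimes n}+\tfrac12\Ber(1-p)^{\otimes n}$. The same mixture arises under $\Ber(p_1)$, with the two roles swapped. Hence the privacy error is exactly $0$ for $\cH_0$, and $\cH_1$ is a singleton. Communication is one bit per client.

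\emph{Correctness.} Under $\cH_1$ the output is $\tfrac12\Ber(q)^{\otimes n}+\tfrac12\Ber(1-q)^{\otimes n}$. The detector computes the empirical mean $\hat m=\frac1n\sum_i Y_i$, folds it to $\min(\hat m,1-\hat m)$, and thresholds it at the midpoint between $\min(p,1-p)$ and $\min(q,1-q)$. These two values are distinct: $q\notin\{p,1-p\}$ rules out $\min(q,1-q)=\min(p,1-p)$. Folding is what makes the key flip irrelevant. A Hoeffding bound applied to each mixture component gives error $e^{-\Omega(n)}$ under both classes.

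The main care point is checking that the affine reduction keeps $q\notin\{p,1-p\}$ together with the channel-feasibility constraints; injectivity of the affine map settles this. I expect no other real obstacle.
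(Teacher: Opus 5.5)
Your proposal is correct and follows the paper's route. The paper likewise uses an affine binary channel with nonzero slope, normalized so the images of $p_0$ and $p_1$ sum to one (its $f(t)=mt+k$ with $f(p_0)+f(p_1)=1$), relies on injectivity to get $q\notin\{p,1-p\}$, and then XORs each sample with the shared key bit. The only difference is that you exhibit an explicit detector (the folded empirical mean with a Hoeffding bound), where the paper simply invokes an optimal test for $\{\Ber(p),\Ber(1-p)\}$ vs. $\{\Ber(q),\Ber(1-q)\}$.
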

\begin{proof}
    First, consider the case where, for some $q\notin\{p,1-p\}$, $\mu_0=\Ber(p)$, $\mu_1=\Ber(1-p)$ and $\mu_2=\Ber(q)$. Let the shared key $K$ be a uniformly random bit. For each $i$, let $W_{Y_i|X_i,K}=X_i\oplus K$.
    Let $\det$ be a hypothesis test for $\cH'_0=\{\Ber(p),\Ber(1-p)\}$ vs. $\cH'_1=\{\Ber(q),\Ber(1-q)\}$ using $n$ samples with correctness error at most $\epsilon$. We will argue that $\left(\{W_{Y_i|X_i,K}\}_{i=1}^n,\det\right)$ is a perfectly private distributed hypothesis test for $\cH_0$ vs. $\cH_1$ with correctness error at most $\epsilon$.
    For $i\in\{0,1,2\}$, let
    \[ P^{i}_{X^nKY^n} = \Ber(p_i)^{\otimes n}\cdot P_K\cdot \prod_{i=1}^n W_{Y_i|X_i,K} \]
    Clearly, when the test distribution is $\Ber(p_i)$, $(Y_1,\ldots,Y_n)$ are distributed according to $P^i_{Y^n}$ for $i\in\{0,1,2\}$.
    Observe,  for $a,b\in\{0,1\}$, conditioned on $K=b$, $P^{a}_{Y^n} = \Ber(p_{a\oplus b})^{\otimes n}$.
    Furthermore, $P^{2}_{Y^n} = \Ber(b\cdot(1-q)+(1-b)\cdot q)^{\otimes n}$ conditioned on $K=b$.
    By the first observation, $P^0_{Y^n}$ and $P^1_{Y^n}$ are identical distributions---they are distributed according to $\Ber(p)^{\otimes n}$ with probability $1/2$ and according to $\Ber(1-p)^{\otimes n}$ with probability $1/2$---ensuring perfect privacy.
    A similar statement holds for $P^2_{Y^n}$, but with $q$ in place of $p$ by the second observation.
    For any realization of the shared key $K$, $(Y_1,\ldots,Y_n)$ are distributed i.i.d. according a distribution in $\cH'_0$ (resp., $\cH'_1$) when the test distribution is in $\cH_0$ (resp., $\cH_1$).
    Thus, the correctness error of the test reduces to the correctness error of $\det$ for $\cH'_0$ vs. $\cH'_1$, which is at most $\epsilon$. An optimal test $\det$ will achieve correctness error $\epsilon = e^{-\Omega(n)}$. The communication incurred by the test $\det$ is exactly $n$ bits.

    In the rest of the proof, we will prove the following:
    \begin{claim}
        There exists $W_{Y|X}$ and $p,q$ such that $q\notin\{p,1-p\}$, $W_{Y|X}\cdot \Ber(p_0)=\Ber(p)$, $W_{Y|X}\cdot \Ber(p_1)=\Ber(1-p)$ and $W_{Y|X}\cdot \Ber(p_2)=\Ber(q)$ when $p_2\notin\{p_0,p_1\}$ and $p_0\neq p_1$.
    \end{claim}
    Note, when $p_0=p_1$, the distributions $\mu_0$ and $\mu_1$ are identical, and thus, the test is trivial.
    Otherwise, composing the channel $W_{Y|X}$ with the perfectly private distributed hypothesis test for Bernoulli distributions, we obtain the required perfectly private distributed hypothesis test for $\cH_0$ vs. $\cH_1$ using $1$ bit shared key. This concludes the proof of Theorem~\ref{thm:achievability}.
    
    We now prove the above claim.
    To find $W_{Y|X}$, let $W_{Y|X}(0|0)=\alpha$ and $W_{Y|X}(0|1)=\beta$ for some $\alpha,\beta\in[0,1]$.
    Define the function $f:t\mapsto W_{Y|X}\circ\Ber(t)(0)$ for $t\in[0,1]$. 
    It is easy to see that $f(t) = \alpha\cdot t + \beta\cdot (1-t)$; alternatively, $f(t) = m\cdot t + k$ where $m=\alpha-\beta$ and $k=\beta$.

    We have, $f(p_0)+f(p_1)=p+1-p=1$.
    Hence, 
    $$m\cdot (p_0+p_1) + 2k = 1 \implies k = \frac{1}{2}-\frac{m\cdot (p_0+p_1)}{2}.$$
    Forcing the constraints $\alpha= m+k\in[0,1]$ and $\beta=k\in[0,1]$ and using the above expression for $k$, we have the following constraints on $m$:
    \begin{align*}
        -\frac{1}{2-p_0-p_1} \le m \le \frac{1}{2-p_0-p_1}\\
        -\frac{1}{p_0+p_1} \le m \le \frac{1}{p_0+p_1}
    \end{align*}
    Since $0<p_0+p_1<2$, we can choose $0\neq m$ that satisfies the above inequalities.
    With this choice of $m$, we have $f(p_0)+f(p_1)=1$, $f(p_2)\notin \{f(p_0),f(p_1)\}$.
    The latter follows from the fact that $f$ is injective since $m\neq 0$, and $p_2\notin\{p_0,p_1\}$. 
    Since $\alpha$ and $\beta$ associated with $f$ are in $[0,1]$, $W_{Y|X}$ is a valid channel. This concludes the proof of the claim.
\end{proof}

\section*{Missing Details in Remark~\ref{remark}} 
\begin{lemma}
$\dtv(P,Q) \ge \delta$ if and only if there exists a deterministic test $D$ that uses sample $X$ and outputs $D(X) \in \{0,1\}$ such that
\[
\bP_P[D(X) = 0] + \bP_Q[D(X) = 1] \ge 1+\delta.
\]
\end{lemma}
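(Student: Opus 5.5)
The plan is to prove the two directions separately, using the set $A^\ast=\{x\in\cX : P(x)\ge Q(x)\}$ and the identity $\dtv(P,Q)=\max_{A\subseteq\cX}\bigl(P(A)-Q(A)\bigr)=P(A^\ast)-Q(A^\ast)$. This identity is the only nontrivial ingredient, and it is standard. Since $\sum_x (P(x)-Q(x))=0$, the positive and negative parts of $P-Q$ have equal mass. That common mass equals $\frac12\sum_x|P(x)-Q(x)|=\dtv(P,Q)$, and it is exactly $P(A^\ast)-Q(A^\ast)$. Moreover, for any $A$ we have $P(A)-Q(A)\le \sum_{x\in A\cap A^\ast}(P(x)-Q(x))\le P(A^\ast)-Q(A^\ast)$.

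For the forward direction, suppose $\dtv(P,Q)\ge\delta$. Define the deterministic test $D(x)=0$ if $x\in A^\ast$ and $D(x)=1$ otherwise. Then
\[
\bP_P[D(X)=0]+\bP_Q[D(X)=1]=P(A^\ast)+1-Q(A^\ast)=1+\dtv(P,Q)\ge 1+\delta .
\]

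For the reverse direction, take any deterministic $D$ and set $A=D^{-1}(0)$. Then $\bP_P[D(X)=0]+\bP_Q[D(X)=1]=1+P(A)-Q(A)\le 1+\dtv(P,Q)$. So if the sum is at least $1+\delta$, it follows that $\dtv(P,Q)\ge\delta$.

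The only step that needs care is the maximization identity, specifically the equal-mass observation. Everything else is a direct substitution.
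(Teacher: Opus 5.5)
Your proof is correct and follows essentially the same route as the paper: both reduce the sum to $1+P(A)-Q(A)$ (the paper writes it as $1+Q(A)-P(A)$ with $A=\{x : D(x)=1\}$) and invoke the variational identity $\dtv(P,Q)=\max_{A}\bigl(P(A)-Q(A)\bigr)$, with the maximizing set giving the test in one direction and the upper bound giving the other. The differences are cosmetic: you argue the converse directly rather than by contrapositive, and you also prove the variational identity, which the paper simply asserts.
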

\begin{proof}
    For any deterministic test $D$, we have
    \begin{align*}
        &\bP_P[D(X) = 0] + \bP_Q[D(X) = 1] \\
        &= 1 - \bP_P[D(X) = 1] + \bP_Q[D(X) = 1]\\
        &= 1 + Q(A) - P(A).
    \end{align*}
    where  $A = \{x : D(x) = 1\}$.
Suppose $\dtv(P,Q) < \delta$. Then, since $\dtv(P,Q) = \max_{A\subseteq \cX} Q(A) - P(A)$, we have
    \begin{align*}
        \bP_P[D(X) = 0] + \bP_Q[D(X) = 1] & \le 1 + \dtv(P,Q) \\
        &< 1 + \delta.
    \end{align*}
Next, suppose that $\dtv(P,Q) \ge \delta$ and consider the test $D^*(x) = \indic_{\{x \in A^*\}}$ where $\dtv(P,Q) = Q(A^*)-P(A^*)$. Then, from the calculation above, we have
\begin{align*}
    \bP_P[D^*(X) = 0] + \bP_Q[D^*(X) = 1] &= 1 + \dtv(P,Q) \\
    &\ge 1+\delta.
\end{align*}
\end{proof}
\fi
\end{document}